    \newcommand{\secref}[1]{\S\ref{#1}}
    \newcommand{\figref}[1]{Figure~\ref{#1}}
    \def\ie{\begin{equation}\begin{aligned}}
    \def\fe{\end{aligned}\end{equation}}
    \newcommand{\E}{{\epsilon}}
    \newcommand{\bP}{{\mathbb P}}
    \newcommand{\ld}{\lambda}
    \newcommand{\cT}{{\mathcal T}}
    \newcommand{\cZ}{{\mathcal Z}}
\theoremstyle{plain}
  \newtheorem{proposition}{Proposition}
  \newtheorem{lemma}{Lemma}
  \newtheorem{conjecture}{Conjecture}
\theoremstyle{definition}
\theoremstyle{remark}
  \newtheorem{remark}{Remark}
\begin{document}

\begin{titlepage}

\begin{center}

\hfill \\
\hfill \\
\vskip 1cm

\title{Evidence for an algebra of $G_2$ instantons II}

\author{Jihwan Oh$^{\dagger}$, Yehao Zhou$^{\ast}$}
\address{$^{\dagger}$ Mathematical Institute, University of Oxford, Woodstock Road, Oxford, OX2 6GG, United Kingdom}
\address{$^{\ast}$Perimeter Institute for Theoretical Physics, 31 Caroline St. N., Waterloo, ON N2L 2Y5, Canada}

\end{center}

\abstract{Building on our previous work [2109.01110], we will compute a new kind of $G_2$ instanton partition function. By doing so, we complete a set of building blocks of the instanton partition function associated with a large class of $G_2$ manifolds.} 
\end{titlepage}

\tableofcontents
\section{Introduction and summary}
This paper is a subsequent development of our previous work \cite{DelZotto:2021ydd} with Michele Del Zotto, where we gave an evidence for an algebra of $G_2$ instantons of a certain $G_2$ manifold, which is a cone on $\bC\bP^3$. In this work, we will generalize the technique developed in our previous work and apply it to another $G_2$ manifold, which is a self-dual 2-form bundle over $\bC\bP^2$, and compute an algebra of $G_2$ instantons associated to it. Combining with our previous result, we complete a set of building blocks of instanton partition function associated to a new class of non-compact $G_2$ manifold, as introduced in \cite{DelZotto:2021ydd}.

To be more precise about what we mean by the building blocks, let us recall how the new class of non-compact $G_2$ manifolds arises in the context of type IIA / M-theory duality. Consider stacks of D6 branes in type IIA string theory that wrap a set of special Lagrangian 3-cycles $\cC_i$ in a CY 3-fold $X$. The configuration can be uplifed to a hyperKähler fibration on a collection of associative
cycles $\cC_i$ and this gives rise to a $G_2$ manifold that is a U(1) fibration over the Calabi-Yau 3-fold $X$, where the U(1) fiber degenerates along the cycles $\cC_i$. Note that $\cC_i$ is at the locus of $\bC^2/\bZ_{N_i}$ singularities and the singularities enhance to $\bC^2/\bZ_{N_i+N_j+1}$
at the intersection of two stacks of D6-branes. Interesting physical information is encoded in the set of $\cC_i$'s and in the way they intersect each other. Therefore, we can focus on $\cC_i$'s and represent the $G_2$ manifold as a chain of $\cC_i$'s.

Consider the most basic chain, which consists of two $\cC_i=S^3$ intersecting at a point. Locally, the intersection can be described by two copies of $\bR^3$ intersecting at a point. This example was thoroughly analyzed in \cite{DelZotto:2021ydd}. Another elementary chain, which consists of three copies of $\cC_i$ that intersect at a point $\cC_1\cap\cC_2\cap\cC_3=\{\text{pt}\}$, is well known in the literature \cite{Atiyah:2001qf,Acharya:2001gy}. Locally, it is given by a bundle of self-dual 2-form on $\bC\bP^2$. Let us call it $\mathcal{X}$. The corresponding D6 brane stacks share 4 directions and intersect at a point in the remaining directions. 

The main result of this paper is the calculation of the instanton partition function of the $U(1)$ gauge theory on $\mathcal{X}$. Combining with our previous result, we have collected elementary building blocks that can be used to compute the instanton partition function of the new class of $G_2$ manifolds that we specified above. A natural question is then whether we can glue the elementary building blocks to compute instanton partition functions of more complicated $G_2$ manifolds. We also report partial progress in this direction in the last section.

Let us briefly outline the idea under the computation. The first step in computing the gauge theory partition function on a $G_2$ manifold is to use the saddle point approximation to convert the path integral into a sum of saddle points. We will call the saddle point configuration the $G_2$ instanton. For $\mathcal{X}$, we can explicitly construct the moduli space of the $G_2$ instantons by embedding the problem in the twisted M-theory \cite{Costello:2016mgj,Costello:2016nkh,Costello:2017fbo} and T-dualizing it into the type IIA frame. Along with the intersecting D6 brane system, the $G_2$ instanton saddles previously wrapped by the M2 branes become three stacks of the D2 branes. As a result, the previously intractable problem of the $G_2$ instanton moduli space translates into a tractable problem of the supersymmetric vacua of three stacks of D2 branes in the presence of three stacks of D6 branes. 

Once we have access to the moduli space of vacua, we can compute the $G_2$ instanton partition function that will be a key component of the entire Donaldson-Thomas partition function of the gauge theory on the $G_2$ manifold \cite{Iqbal:2003ds,Donaldson:1996kp,Acharya:1997gp}. Along the way, we will see the $G_2$ instanton moduli space $\cM(M_1,M_2,M_3)$ naturally admits an action of the algebra of operators of the 5d Chern-Simons theory associated with the $G_2$ manifold, where $M_i$ is the number of D2-branes in each of the three stacks. In other words, there is an action of $Y(\hat{\mathfrak{gl}}_1)\otimes Y(\hat{\mathfrak{gl}}_1)\otimes Y(\hat{\mathfrak{gl}}_1)$ on
\ie 
\bigoplus_{M_1,M_2,M_3}H^*_{\mathbf{T}_{\mathrm{edge}}}(\cM(M_1,M_2,M_3)).
\fe
Here, $Y(\hat{\mathfrak{gl}}_1)$ is affine Yangian of $\mathfrak{gl}_1$ and $\mathbf{T}_{\mathrm{edge}}$ is the torus action that regularizes noncompact $\cM$.

The outline of this paper is as follows. In \secref{sec:2}, we review the 7d/4d relation that we advocated in our previous work and explain how the current work completes the set of building blocks of the $G_2$ instanton partition function associated to the new class of $G_2$. In \secref{sec:3}, we study the moduli space of $G_2$ instantons by quantizing the strings stretched between the D-branes and constructing the low-energy effective action of 3d $\cN=2$ gauge theories on two D2-brane stacks. In \secref{sec:4}, we perform a formal analysis of the $G_2$ instanton moduli space. Importantly, the torus fixed point of the $G_2$ instanton moduli space is compact. This enables us to compute the equivariant K-theory index that counts the $G_2$ instantons. We then make contact with the 5d Chern-Simons theory. In \secref{sec:5}, we explain how to glue two copies of one of the two elementary building blocks of the $G_2$ instanton partition function. 
\section{7d/4d correspondence}\label{sec:2}
In our previous work \cite{DelZotto:2021ydd}, we have explained the following duality chain.
\begin{itemize}
    \item [\bf{IIB}] Consider N D3 branes probing a local singular Calabi-Yau cone X. Let us denote the worldvolume theory of the fractionalized D3 brane stack as $T^{4d}_{X,N}$. This is the 4d $\cN=1$ quiver superconformal field theory(SCFT).
    \item [\bf{IIA}] Using SYZ-like mirror symmetry \cite{Strominger:1996it}, one can arrive at the type IIA frame where D6 branes wrap a set of special Lagrangian 3-cycles $\cC_i$ in the mirror CY 3-fold $X^\vee$ \cite{Hanany:2001py,Feng:2005gw}. Each of D6 brane stack wrapping $\cC_i$ gives rise to the 4d $\cN=1$ vector multiplet and at the point where there is a nonzero intersection between $\cC_i$ and $\cC_j$, there is the 4d $\cN=1$ bifundamental chiral multiplet. 
    \item [\bf{M}] One can perform the $S^1$ uplift of the type IIA configuration to the M-theory on a new class of $G_2$ manifold $\Gamma_{X^\vee,N}$, which is a hyper-Kahler fibration over the set of $\cC_i$'s.
\end{itemize}
In view of the general philosophy of geometric engineering \cite{Leung:1997tw,DelZotto:2021gzy}, the following relation holds.
\ie
T^{4d}_{X,N}=T_{\Gamma_{X^\vee,N}},
\fe
where the latter theory is obtained by taking the geometric engineering limit of the M-theory on the $G_2$ manifold $\Gamma_{X^\vee,N}$. To be concrete, let us fix the manifold on which the theory $T^{4d}_{X,N}$ is supported as Taub-NUT manifold, $TN_1$. Then, the above relation implies that
\ie
\cZ^{4d}_{T_{X,N}}(TN_1)=\cZ_{\text{M-theory}}(TN_1\times \Gamma_{X^\vee,N}).
\fe
Recalling that the theory obtained from M-theory on $TN_k$ is a 7d $U(k)$ SYM, we can write the schematic form of the 4d/7d correspondence as
\ie
\cZ^{4d}_{T_{X,N}}(TN_1)=\cZ^{7d}_{U(1)}(\Gamma_{X^\vee,N})
\fe

As the second step {\bf{IIA}} indicates, the theory is entirely characterized by the set of special Lagrangian 3-cycles $\cC_i$ and their intersection matrix. The simplest example of a chain of 3-cycles is given by $\cC_1=S^3_1$ and $\cC_2=S^3_2$ where $S^3_1\cap S^3_2=\{pt\}$. The local model at the intersection of the D6 brane stacks is the 4d $\cN=1$ bifundamental chiral multiplet theory. At the same time, zooming to the intersection of the 3-cycles, one finds a well-known $G_2$ cone on $\bC\bP^3$ \cite{BryantSalamon}. We discussed this example in detail in \cite{DelZotto:2021ydd}. 

There is another simple local $G_2$ cone, which is a self-dual two-form bundle over $\bC\bP^2$. Similar to the above example, one can find this geometry by zooming in the local intersection of three copies of 3-spheres. In view of the 4d/7d relation, one might wonder what the corresponding 4d theory is. This theory is believed to be the superconformal field theory, which lives at the origin of the singular moduli space of a certain 4d $\cN=1$ field theory \cite{Atiyah:2001qf}.

At this point, looking at the two types of local intersections described above, one might notice a structural similarity of the 4d/7d relation and the 4d/2d relation in class-S construction \cite{Gaiotto:2009we,Gaiotto:2009hg}. For example, the class-S superconformal index can be computed by gluing the partition function of topological field theory on a pair of pants and a cylinder \cite{Gadde:2011ik}; the idea of gluing even goes back to the example of topological vertex \cite{Aganagic:2003db,Iqbal:2007ii}. If we admit that there exists a class of $G_2$ manifolds characterized by the chain of associative 3-cycles, we may wonder about an algorithmic way of computing the associated $G_2$ instanton partition function using some elementary building blocks. In our previous work, we studied the cylinder component of the general $G_2$ instanton partition function. This corresponds to the simple intersection between two 3-cycles. We proved that $T_{\Gamma_{X^\vee,N}}$ admits an action of two copies of the 5d CS algebra
\ie
U_{\E_1}(\text{Diff}_{\E_2}\bC\otimes\mathfrak{gl}_1)\times U_{\E'_1}(\text{Diff}_{\E_2}\bC\otimes\mathfrak{gl}_1),
\fe
where $Diff_\epsilon\bC$ is the algebra of differential operators on $\bC$, parametrized by $\epsilon$. 

In this work, we study the pair-of-pants component, which corresponds to a triple intersection of three 3-cycles. As one can naturally expect, $G_2$ instanton partition function admits an action of
\ie
U_{\E_1}(\text{Diff}_{\varepsilon}\bC\otimes\mathfrak{gl}_1)\times U_{\E_2}(\text{Diff}_{\varepsilon}\bC\otimes\mathfrak{gl}_1)\times U_{\E_3}(\text{Diff}_{\varepsilon}\bC\otimes\mathfrak{gl}_1),
\fe
where
\ie
\E_1+\E_2+\E_3=0.
\fe
We will explain the origin of the algebra in \secref{subsec:twistedM}.
\subsection{The $G_2$ manifold}
Let us briefly recall the $G_2$ manifold of our interest following \cite{Atiyah:2001qf}. There are two representations. First, it is a cone on $Y=SU(3)/U(1)^2$. It can also be represented as a bundle of self-dual 2-forms on $\bC\bP^2$, $\mathcal{X}$. An important point, which  we will use to go to the type IIA frame, is the fixed points under $S^1$ reduction are three copies of $\bR^3$, which intersect at a point in $\bR^6$. Moreover, 
\ie
\mathcal{X}/S^1\equiv\bR^6\text{ and }Y/S^1\equiv S^5.
\fe
The $U(1)$ fixed points of $Y$ is 
\ie
S^2\cup S^2\cup S^2.
\fe
\subsection{Twisted M-theory perspective}\label{subsec:twistedM}
Before we dive into the computation of the $G_2$ instanton partition, let us recall twisted M-theory set-up \cite{Costello:2016mgj,Costello:2016nkh,Costello:2017fbo} to get some intuition on the algebra that acts the $G_2$ instanton partition function, which we will get in the following section. To do so, we will focus on the local patch that we mentioned in the previous subsection.

The relevant twisted M-theory background with which we will work is $TN_1\times\mathcal{X}$. Since both of $\mathcal{X}$ and $TN_1$ admits the $U(1)$ action, we have two choices of circle reduction, and each of the reductions leads to a different type IIA configuration. Reducing $S^1_\mathcal{X}$ in $\mathcal{X}$, we get three stacks of D6-branes that extend over the direction $TN$ and intersect at a point in the remaining 6-directions. Let us call this type IIA background frame 1. This circle reduction is Q-exact in the twisted M-theory, and the resulting type IIA frame retains the same physical information as in the twisted M-theory. On the other hand, by reducing $S^1_{TN}$, one gets a D6-brane wrapping $\mathcal{X}$; we will call it frame 2. By $G_2$ instantons, we mean the instanton configuration of the 7d gauge theory associated to the D6-brane. As in \cite{DelZotto:2021ydd}, we will take an indirect way to compute the instanton partition function: we will model the $G_2$ instantons configuration as D2-branes embedded in the three different stacks of D6-branes in frame 1.  

Even without a detailed computation of the partition function, we can notice some general structure under it by studying the D6-brane stacks. First, recall that the 5d $U(N)$ Chern-Simons theory on $TN_1\times\bR_t$ is obtained from a worldvolume theory of N D6 branes on $\bR^2_{\Omega_{\E}}\times TN_1\times\bR$ after localization on $\bR^2_{\Omega_{\E}}$(an Omega deformed 2-plane) \cite{Costello:2018txb}. There are three stacks of $D6$ branes that wrap three copies of $\bR_i^3$, respectively, in frame 1. Each of the $\bR_i^3$ has $U(1)$ isometry and the overall $U(1)$ decouples, as indicated in \cite{Atiyah:2001qf}. In other words, one can use the three dependent $U(1)$ isometry to turn on the Omega background $\Omega_{\E_1}\times\Omega_{\E_2}\times\Omega_{\E_3}$, where $\E_1+\E_2+\E_3=0$. \footnote{The isometries of the original $G_2$ geometry is PSU(3), so we have an Omega background with 2 parameters corresponding to the Cartan of PSU(3). We thank Kevin Costello for letting us know about this point.} Since three stacks of D6 branes intersect at a point, the resulting three copies of the 5d CS theory on $TN_1\times\bR_i$ share the holomorphic direction $TN_1$ and the three topological lines $\bR_i(\equiv\bR^3_i/\Omega_{\E_i})$ intersect at a point.
\begin{figure}[H]
    \centering
    \vspace{-0.2cm}
    \includegraphics[width=6cm]{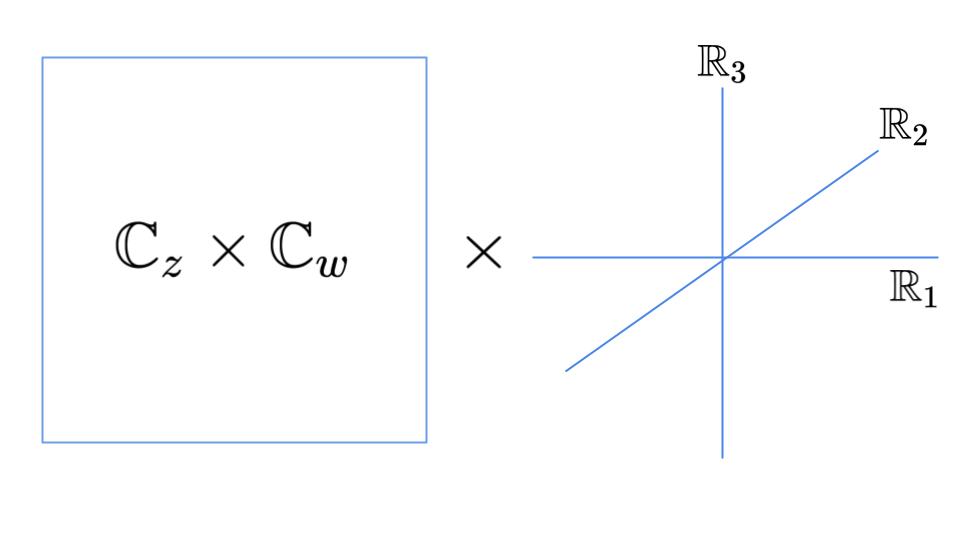}
    \vspace{-0.4cm}
    \caption{Three copies of the 5d CS theory}
    \vspace{-0.2cm}
    \label{fig:doublecontraction}
     \centering
\end{figure}

The associated algebra of operators of the 5d CS theory is 
\ie
U_{\E_1}(\text{Diff}_\varepsilon\bC\otimes\mathfrak{gl}_1)\times U_{\E_2}(\text{Diff}_\varepsilon\bC\otimes\mathfrak{gl}_1)\times U_{\E_3}(\text{Diff}_\varepsilon\bC\otimes\mathfrak{gl}_1),
\fe
where $\varepsilon$ is the constant that parameterizes the strength of the B field $\varepsilon d\bar zd\bar w$ on $TN_1$; this B field descends from the M-theory 3-form $C=\varepsilon V^\flat d\bar zd\bar w$ by reducing the circle $S^1$ on which the $U(1)$ vector field $V$ generates rotation. $V^\flat$ is the dual 1-form of $V$. In the following sections, especially in {\bf{Proposition 6}} and \secref{sec:connectionto5dcs}, we will find that the $G_2$ instanton partition function admits an action of the above algebra. 
\section{Moduli space analysis}\label{sec:3}
\subsection{From geometry to branes and field theory}
The $G_2$ manifold and the corresponding $G_2$ instantons of our interest map into the following D6/D2 stacks:
\ie
D6_i=\bR^3_i\times\bR_{0123}^4,\quad D2_i=\bR^3_i.
\fe
The D-brane stacks intersect in the following way:
\ie
D6_1\cap D6_2\cap D6_3=\bR^4_{0123},\quad D6_i\cap D2_i=\bR^3_i,\quad D6_i\cap D2_j=\{\text{pt}\},\quad D2_i\cap D2_j=\{\text{pt}\},
\fe
Since the M-theory/type IIA duality that we utilized to arrive at this D-brane configuration preserves the amount of supersymmetry and the M-theory compactification on $G_2$ manifold preserves 1/8 SUSY, the above D-brane configuration preserves 4 supercharges.

Our D-brane configuration can be thought of as an extended version of the configuration that we studied in our previous work \cite{DelZotto:2021ydd}. We will simply summarize the results here referring to \cite{DelZotto:2021ydd,Nekrasov:2016gud} for more details of the derivation.
\ie\label{fieldcontent}
D2_i-D2_i\text{ strings }&\equiv~\text{3d }\cN=2\text{ vector multiplets }\cV_2\text{ and chiral multiplets }\Phi_i,X_i,Y_j\\
D2_i-D2_j\text{ strings }&\equiv~\text{3d }\cN=2\text{ bifundamental hyper multiplets } S_{ij}\\
D2_i-D6_i\text{ strings }&\equiv~\text{3d }\cN=4\text{ hypermultiplet }I_i,J_i\\
D6_i-D6_i\text{ strings }&\equiv~\text{7d }U(N_i)\text{ vector multiplet }\cV^{7d}_i\\
D6_i-D6_j\text{ strings }&\equiv~\text{4d }\cN=1\text{ bifundamental chiral multiplet }\mathcal{C}_{ij}
\fe
For our purpose that will be described in the next subsection, we can focus on the strings that are associated with the 3d massless fields: $D2_i-D2_i$, $D2_i-D6_i$, $D2_i-D2_j$, $D2_i-D6_j$ strings. Considering only the massless states, we get 3d $\cN=2$ theory on the $D2$ branes.

The following figure summarizes the bosonic field content of the theory determined above.
\begin{figure}[H]
    \centering
    \includegraphics[width=15cm]{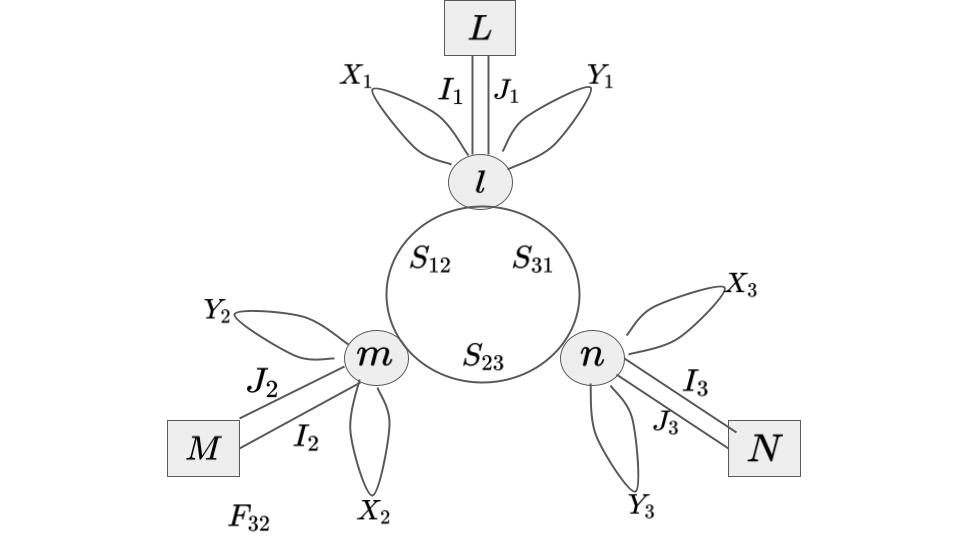}
    \caption{A collection of bosons}
    \label{fig:doublecontraction}
     \centering
\end{figure}
\subsection{Moduli space of instantons}\label{sec:Dterm}
As we have translated the problem of the moduli space of $G_2$ instantons $\cM$ to the moduli space of D2 branes lying in D6 branes, we need to compute the supersymmetric vacua of the worldvolume theory of the D2 brane determined in the previous subsection. 

In terms of 3d $\cN=2$ superfields, $\cM$ is parametrized by the scalar components of  
$I_i$, $J_i$, $X_i$, $Y_i$, $S_{ij}$. Repeating the exercise similar to \cite{DelZotto:2021ydd}(analyzing D-term equations), one can get the following defining equations of the moduli space.

From the first gauge node$(D2_1)$, we get
\ie\label{Dtermsp1}
&\left[X_1,X_1^\dagger\right]+\left[Y_1,Y_1^\dagger\right]+I_1I_1^\dagger-J_1J_1^\dagger+S_{12}^\dagger S_{12}+S_{31}^\dagger S_{31}-\xi\cdot\text{I}_{M_1\times M_1}=0.\\
&\left[X_1,Y_1\right]+I_1J_1=0.
\fe
We can do the same for the gauge nodes$(D2_2,D2_3)$ of the quiver. We get
\ie\label{Dtermsp2}
&\left[X_2,X_2^\dagger\right]+\left[Y_2,Y_2^\dagger\right]+I_2I_2^\dagger-J_2J_2^\dagger+S_{23}^\dagger S_{23}+S_{12}^\dagger S_{12}-\xi\cdot\text{I}_{M_2\times M_2}=0,\\
&\left[X_2,Y_2\right]+I_2J_2=0,\\
&\left[X_3,X_3^\dagger\right]+\left[Y_3,Y_3^\dagger\right]+I_3I_3^\dagger-J_3J_3^\dagger+S_{23}^\dagger S_{23}+S_{12}^\dagger S_{12}-\xi\cdot\text{I}_{M_3\times M_3}=0,\\
&\left[X_3,Y_3\right]+I_3J_3=0.
\fe

Hence, the moduli space of $G_2$ instantons $\cM^{L,M,N}_{M_1,M_2,M_3}$ is the $U(M_1)\otimes U(M_2)\otimes U(M_3)$ quotient of the space of solutions of \eqref{Dtermsp1}, \eqref{Dtermsp2}, where the groups $U(M_1)\otimes U(M_2)\otimes U(M_3)$ act as
\ie
\left(X_i,Y_i,I_i,J_i\right)&\mapsto\left(g_i^{-1}Xg_i,g_i^{-1}Yg_i,g_i^{-1}I,Jg_i\right),\quad\text{where }g_i\in U(M_i),\\
S_{ij}&\mapsto g^{-1}_iS_{ij}g_j,\quad\text{where }g_i\in U(M_i),~g_j\in U(M_j)
\fe
\section{Index}\label{sec:4}
\subsection{Moduli space as a Nakajima quiver variety}\label{sec:5.1}

Let us denote the corresponding Nakajima quiver variety of \figref{fig:doublecontraction}  by ${\cM}^{1,1,1}(M_1,M_2,M_3)$. ${\cM}^{1,1,1}(M_1,M_2,M_3)$ is smooth variety of dimension $M_1M_2+M_2M_3+M_3M_1+2M_1+2M_2+2M_3$, defined as the space of solutions to equations \eqref{Dtermsp1}, \eqref{Dtermsp2} modulo the action of $U(M_1)\times U(M_2)\times U(M_3)$.
\ie
S_{12}I_1&=S_{23}I_2=S_{31}I_3=0
\fe
Note that the indices are the node numbers. The three equations of \eqref{Dtermsp1}, \eqref{Dtermsp2} involving complex conjugate are the real moment map $\mu_{\bR}=(\xi\cdot\text{I}_{M_1\times M_1},\xi\cdot\text{I}_{M_2\times M_2},\xi\cdot\text{I}_{M_3\times M_3})$, and the other three equations are the complex moment map $\mu_{\bC}=0$. Based on the result of Kempf and Ness \cite{kempf1979length}, there is a holomorphic description of the moduli space ${\cM}^{1,1,1}(M_1,M_2,M_3)$.
\begin{lemma}\label{Lemma: algebro-goemetric description}
Assume that $\xi>0$, then we have an isomorphism:
\ie 
\cM^{1,1,1}(M_1,M_2,M_3)=\mu_{\bC}^{-1}(0)^{\mathrm{stable}}/\mathrm{GL}_{M_1}\times \mathrm{GL}_{M_2}\times \mathrm{GL}_{M_3},
\fe
where ``stable" means the locus inside $\mu_{\bC}^{-1}(0)$ such that $\{X_i,Y_i,I_i,J_i\}$ satisfies 
\ie\label{eqn: stability}
\bC\langle X,Y,S\rangle( \mathrm{Im}(I_1)+ \mathrm{Im}(I_2)+ \mathrm{Im}(I_3))&=\bC^{M_1}\oplus \bC^{M_2}\oplus \bC^{M_3}.
\fe
\end{lemma}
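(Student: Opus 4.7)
The plan is to prove this via the standard Kempf--Ness correspondence between hyperkähler quotients and GIT quotients, specialized to the framed quiver variety at hand. First, I would re-read the defining equations \eqref{Dtermsp1}, \eqref{Dtermsp2} as a hyperkähler moment map condition: the three equations involving daggers constitute the real moment map $\mu_{\bR}$ for the action of $U(M_1)\times U(M_2)\times U(M_3)$ at level $(\xi\cdot\mathbf{I}_{M_1},\xi\cdot\mathbf{I}_{M_2},\xi\cdot\mathbf{I}_{M_3})$, while the three commutator equations constitute the complex moment map $\mu_{\bC}$ at level $0$. The quotient $\mu_{\bR}^{-1}(\xi\cdot\mathbf{I})\cap\mu_{\bC}^{-1}(0)/U(M_1)\times U(M_2)\times U(M_3)$ is the Kähler realization of $\cM^{1,1,1}(M_1,M_2,M_3)$.

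The second step is to invoke the Kempf--Ness theorem (as adapted by Nakajima for framed quivers, following Kirwan): because the level $\xi$ of the real moment map is strictly positive, the intersection of any $GL_{M_1}\times GL_{M_2}\times GL_{M_3}$-orbit in $\mu_{\bC}^{-1}(0)$ with $\mu_{\bR}^{-1}(\xi\cdot\mathbf{I})$ is either empty or a single $U(M_1)\times U(M_2)\times U(M_3)$-orbit. Orbits that meet $\mu_{\bR}^{-1}(\xi\cdot\mathbf{I})$ are precisely the $\chi$-semistable ones for the character $\chi(g_1,g_2,g_3)=(\det g_1\,\det g_2\,\det g_3)^{\xi}$. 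This gives a bijection of underlying sets between the hyperkähler quotient and $\mu_{\bC}^{-1}(0)^{\chi\text{-ss}}/\!/GL_{M_1}\times GL_{M_2}\times GL_{M_3}$, which upgrades to an analytic, and then algebraic, isomorphism by standard arguments.

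The third step is to translate $\chi$-semistability into the combinatorial stability condition \eqref{eqn: stability} via King's criterion. A point $(X_i,Y_i,S_{ij},I_i,J_i)$ in $\mu_{\bC}^{-1}(0)$ is $\chi$-stable for $\xi>0$ iff the only triple of subspaces $V_i\subseteq \bC^{M_i}$ that is invariant under $X_i,Y_i$ and under the bifundamentals $S_{ij}$, and contains $\mathrm{Im}(I_i)$, is the full triple $(\bC^{M_1},\bC^{M_2},\bC^{M_3})$. This is exactly the statement that the subspace generated by $\mathrm{Im}(I_1)+\mathrm{Im}(I_2)+\mathrm{Im}(I_3)$ under the action of the algebra $\bC\langle X,Y,S\rangle$ is all of $\bC^{M_1}\oplus\bC^{M_2}\oplus\bC^{M_3}$. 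Finally, one checks that at such stable points the stabilizer in $GL_{M_1}\times GL_{M_2}\times GL_{M_3}$ is trivial (a Schur-type argument using the fact that a stabilizing triple of matrices must act as scalars on the subspace generated by the framings, hence on the whole space, and the overall scalar is fixed by the framing), so stability and semistability coincide and the GIT quotient is a geometric quotient.

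The main obstacle is the third step, namely the precise matching of $\chi$-stability with the generation condition \eqref{eqn: stability} in the presence of both the intra-node maps $X_i,Y_i,I_i,J_i$ and the bifundamental maps $S_{ij}$ connecting different nodes; the Hilbert--Mumford analysis must handle one-parameter subgroups of $GL_{M_1}\times GL_{M_2}\times GL_{M_3}$ whose weight filtration mixes the three nodes through the $S_{ij}$, and one must verify that testing against sub\emph{triples} of subspaces (rather than sub-modules over each $\bC\langle X_i,Y_i\rangle$ separately) is what emerges from the worst destabilizing one-parameter subgroups. Once this Hilbert--Mumford computation is carried out carefully, the rest of the argument is a standard application of Kempf--Ness machinery for Nakajima-type framed quiver varieties.
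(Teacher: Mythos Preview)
Your proposal is correct and is precisely the argument the paper has in mind: the paper does not actually prove this lemma but simply attributes it to the Kempf--Ness theorem \cite{kempf1979length}, and your outline fills in the standard details (Kempf--Ness for the hyperk\"ahler/GIT identification, King's criterion to convert $\chi$-semistability into the cyclicity condition \eqref{eqn: stability}, and a Schur-type triviality-of-stabilizers check). There is nothing to compare---you have supplied exactly the proof the paper omits.
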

We can compare the definition of ${\cM}^{1,1,1}(M_1,M_2,M_3)$ with a more standard quiver variety denoted by $\widetilde{\cM}^{1,1,1}(M_1,M_2,M_3)$, where the quiver comes from adding three backward arrows $T_{21},T_{32},T_{13}$ to the quiver in \figref{fig:doublecontraction}, and the stability is that
\ie
\bC\langle X,Y,S,T\rangle( \mathrm{Im}(I_1)+ \mathrm{Im}(I_2)+ \mathrm{Im}(I_3))&=\bC^{M_1}\oplus \bC^{M_2}\oplus \bC^{M_3}.
\fe
Then it is easy to see that the natural embedding of the quiver data induces a closed embedding $\cM^{1,1,1}(M_1,M_2,M_3)\hookrightarrow\widetilde{\cM}^{1,1,1}(M_1,M_2,M_3)$ as the zero locus of the universal maps $\cT_{21},\cT_{32},\cT_{13}$, where $\cT_{ij}\in \mathrm{Hom}(\cV_i,\cV_j)$ is the map between the universal bundles on $\widetilde{\cM}^{1,1,1}(M_1,M_2,M_3)$. Using this embedding, we can show that the singularities of $\cM^{1,1,1}(M_1,M_2,M_3)$ are locally of complete intersection.

\begin{proposition}\label{prop: regular embedding}
Assume that $\xi\neq 0$, then the embedding $\cM^{1,1,1}(M_1,M_2,M_3)\hookrightarrow\widetilde{\cM}^{1,1,1}(M_1,M_2,M_3)$ is regular of codimension $M_1M_2+M_2M_3+M_3M_1$. Moreover, locally the set of matrix elements of $\mathcal T_{21},\mathcal T_{32},\mathcal T_{13}$ is a regular sequence of length $M_1M_2$, $M_2M_3$, $M_3M_1$, respectively.
\end{proposition}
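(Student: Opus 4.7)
The plan is to combine a dimension count with the smoothness of $\widetilde{\cM}^{1,1,1}(M_1,M_2,M_3)$ and a standard Cohen--Macaulay argument. First I would verify the set-theoretic identification $\cM^{1,1,1}(M_1,M_2,M_3)=\{\cT_{21}=\cT_{32}=\cT_{13}=0\}$ inside $\widetilde{\cM}^{1,1,1}(M_1,M_2,M_3)$. The complex moment map of the enlarged quiver contains extra terms of the form $TS$ and $ST$ at each node; these vanish on the locus $T=0$, reducing to the moment-map relations of \eqref{Dtermsp1}--\eqref{Dtermsp2}, and the stability condition involving $\bC\langle X,Y,S,T\rangle$ collapses to the stability condition of Lemma~\ref{Lemma: algebro-goemetric description}, which uses $\bC\langle X,Y,S\rangle$.

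Second, I would match the codimension to the number of defining equations via a direct dimension count on the doubled-quiver representation space. Using the standard formula (representation space dimension minus twice the dimension of the gauge group on the stable locus), one obtains $\dim\widetilde{\cM}^{1,1,1}(M_1,M_2,M_3)=2(M_1M_2+M_2M_3+M_3M_1)+2(M_1+M_2+M_3)$, which exceeds the dimension of $\cM^{1,1,1}(M_1,M_2,M_3)$ given in \secref{sec:5.1} by exactly $M_1M_2+M_2M_3+M_3M_1$. This equals the rank of the bundle $\mathrm{Hom}(\cV_2,\cV_1)\oplus\mathrm{Hom}(\cV_3,\cV_2)\oplus\mathrm{Hom}(\cV_1,\cV_3)$ on $\widetilde{\cM}^{1,1,1}$ whose tautological section is $(\cT_{21},\cT_{32},\cT_{13})$.

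Third, I would invoke the general principle that in a Cohen--Macaulay local ring an ideal of height equal to the number of generators is automatically generated by a regular sequence. Since $\widetilde{\cM}^{1,1,1}$ is smooth, the local ring at any point of $\cM^{1,1,1}$ is regular and hence Cohen--Macaulay, and the previous step shows that the matrix entries of $\cT_{21},\cT_{32},\cT_{13}$ generate an ideal of height $M_1M_2+M_2M_3+M_3M_1$. This immediately yields the regular embedding of the asserted codimension. For the refined statement that each block is itself a regular sequence of the advertised length, I would apply the same principle in stages: first $\cT_{21}$, then $\cT_{32}$ on the resulting subvariety, then $\cT_{13}$, using additivity of codimension along a regular sequence to ensure each stage drops the codimension by exactly the expected amount.

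The main obstacle is controlling the codimension of the vanishing locus --- \emph{a priori} the matrix entries of the $\cT_{ij}$ might satisfy relations that reduce it --- and this relies on the dimension count together with smoothness of $\widetilde{\cM}^{1,1,1}$ in a neighborhood of $\cM^{1,1,1}$. A secondary concern is checking that the stability condition for $\widetilde{\cM}^{1,1,1}$ restricts correctly to that of $\cM^{1,1,1}$ on the closed locus $\{T=0\}$, so that the two GIT quotients agree set-theoretically and the Cohen--Macaulay argument applies at every point of $\cM^{1,1,1}$.
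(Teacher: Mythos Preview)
Your outline has the right shape, but there is a genuine gap at the crucial step: you take the dimension of $\cM^{1,1,1}(M_1,M_2,M_3)$ as an input ``given in \S\ref{sec:5.1}'', and then deduce that the codimension of the zero locus of $(\cT_{21},\cT_{32},\cT_{13})$ inside the smooth variety $\widetilde{\cM}^{1,1,1}$ equals the number of equations.  But that dimension statement in \S\ref{sec:5.1} is precisely what the proposition is meant to establish.  A priori $\cM^{1,1,1}$ is only defined as a GIT quotient of $\mu_{\bC}^{-1}(0)$ for the \emph{non-doubled} quiver, and nothing guarantees that $\mu_{\bC}^{-1}(0)$ is of the expected dimension --- equivalently, nothing rules out that the $\cT_{ij}$ cut out a locus of smaller codimension.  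You flag this yourself as the ``main obstacle'', but saying it ``relies on the dimension count'' is circular: the dimension count is exactly the missing content.

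The paper resolves this by working upstairs, before taking the GIT quotient, and proving that the complex moment map $\mu_{\bC}$ of the non-doubled quiver is flat.  The key observation is that the one-way arrows $S_{ij}$ do not appear in the moment map equations \eqref{Dtermsp1}--\eqref{Dtermsp2} at all, so $\mu_{\bC}$ factors through the product $\mu_{\bC}^1\times\mu_{\bC}^1\times\mu_{\bC}^1$ of three ordinary ADHM moment maps.  Each factor is flat by Crawley--Boevey's criterion, hence $\mu_{\bC}$ is flat, and therefore $\mu_{\bC}^{-1}(0)$ is a complete intersection of the expected dimension.  Descending along the principal $\mathrm{GL}_{M_1}\times\mathrm{GL}_{M_2}\times\mathrm{GL}_{M_3}$ bundle then yields the dimension of $\cM^{1,1,1}$, after which your Cohen--Macaulay/regular-sequence argument (or the paper's equivalent phrasing) goes through.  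So the missing idea is this factorisation-through-ADHM observation and the appeal to Crawley--Boevey; without it the codimension claim is unproven.
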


\begin{proof}
If we prove that $\cM^{1,1,1}(M_1,M_2,M_3)$ is local complete intersection (l.c.i) of pure dimension $M_1M_2+M_2M_3+M_3M_1+2M_1+2M_2+2M_3$, then it follows that locally the set of matrix elements of $\mathcal T_{21},\mathcal T_{32},\mathcal T_{13}$ is a regular sequence of length $M_1M_2$, $M_2M_3$, $M_3M_1$, respectively. To this end, we need to show that $\mu_{\bC}^{-1}(0)^{\mathrm{stable}}$ is l.c.i of pure dimension $M_1M_2+M_2M_3+M_3M_1+2M_1+2M_2+2M_3+M_1^2+M_2^2+M_3^2$, since the quotient map $\mu_{\bC}^{-1}(0)^{\mathrm{stable}}\to \mu_{\bC}^{-1}(0)^{\mathrm{stable}}/\mathrm{GL}_{M_1}\times \mathrm{GL}_{M_2}\times \mathrm{GL}_{M_3}=\cM^{1,1,1}(M_1,M_2,M_3)$ is a principal $\mathrm{GL}_{M_1}\times \mathrm{GL}_{M_2}\times \mathrm{GL}_{M_3}$ bundle and l.c.i property descends to a smooth morphism. 

We claim that the map 
\ie\nonumber
\mu_{\bC}:& \mathbf{M_1}\times \mathbf{M_1}^*\times \mathrm{End}(\mathbf{M_1})^{2}\times \mathbf{M_2}\times \mathbf{M_2}^*\times \mathrm{End}(\mathbf{M_2})^{2}\times\mathbf{M_3}\times \mathbf{M_3}^*\times \mathrm{End}(\mathbf{M_3})^{2}\times\\
&\mathrm{Hom}(\mathbf{M_1},\mathbf{M_2})\times\mathrm{Hom}(\mathbf{M_2},\mathbf{M_3})\times\mathrm{Hom}(\mathbf{M_3},\mathbf{M_1})\to \mathfrak{gl}_{M_1}\times \mathfrak{gl}_{M_2}\times \mathfrak{gl}_{M_3}
\fe
is flat. Note that the component $\mathrm{Hom}(\mathbf{M_i},\mathbf{M_j})$ where $S_{ij}$ takes value in, plays no role in the map $\mu_{\bC}$, so $\mu_{\bC}$ factors through the product of moment maps $$\mu_{\bC}^1\times \mu_{\bC}^1\times \mu_{\bC}^1:\prod_{i=1}^3\left( \mathbf{M_i}\times \mathbf{M_i}^*\times \mathrm{End}(\mathbf{M_i})^{2}\right)\to\prod_{i=1}^3 \mathfrak{gl}_{M_i}.$$
Using the fact that $\mu_{\bC}^1:\mathbf{M_i}\times \mathbf{M_i}^*\times \mathrm{End}(\mathbf{M_i})^{2}\to \mathfrak{gl}_{M_i}$ is flat, using Crawley-Boevey's criterion on the flatness of moment map \cite[Theorem 1.1]{Crawley-Boevey:2001}. Hence $\mu^1_{\bC}\times\mu^1_{\bC}\times\mu^1_{\bC}$ is flat.

Since $\{0\}\hookrightarrow \mathfrak{gl}_{M_1}\times \mathfrak{gl}_{M_2}\times \mathfrak{gl}_{M_3}$ is a regular embedding, the flatness of $\mu_{\bC}$ implies that $\mu_{\bC}^{-1}(0)$ embeds in the ambient space regularly and has dimension
\ie
\dim\left[\prod_{i=1}^3\mathbf{M_i}\times \mathbf{M_i}^*\times \mathrm{End}(\mathbf{M_i})^{2}\times\prod_{1\leq i<j\leq3} \mathrm{Hom}(\mathbf{M_i},\mathbf{M_j})-\prod_{i=1}^3\mathfrak{gl}_{M_i}\right]\\
=M_1M_2+M_2M_3+M_3M_1+2M_1+2M_2+2M_3+M_1^2+M_2^2+M_3^2.
\fe
This shows that $\mu_{\bC}^{-1}(0)$ is l.c.i. of pure dimension $M_1M_2+M_2M_3+M_3M_1+2M_1+2M_2+2M_3+M_1^2+M_2^2+M_3^2$, and obviously its open subset $\mu_{\bC}^{-1}(0)^{\mathrm{stable}}$ has the same property.
\end{proof}

\subsection{Virtual tangent bundle}\label{sec:5.2}

Given $\xi\neq 0$, the tangent bundle of ${\cM}^{1,1,1}(M_1,M_2,M_3)$ is the cohomology of the complex
\ie\label{tangent of our moduli}
\bigoplus^3_{i=1}\mathrm{End}(\mathcal V_{i})\overset{\sigma}{\longrightarrow} \bigoplus^3_{i=1}\mathrm{End}(\mathcal V_{i})^{\oplus 2}\bigoplus_{i=1}^3 \mathcal V_{i} \bigoplus_{i=1}^3 \mathcal V_{i}^{*}\bigoplus_{1\leq i\neq j}^{3} \mathrm{Hom}(\mathcal V_{i},\mathcal V_{j})
\overset{d\mu_{\bC}}{\longrightarrow} \bigoplus^3_{i=1}\mathrm{End}(\mathcal V_{i})
\fe
where the middle term has cohomology degree zero. $d\mu_{\bC}$ is the differential of the moment map and is explicitly written as
\ie 
(x_i,y_i,i_i,j_i,s_{ij},s_{ki})\mapsto [X_i,y_i]+[x_i,Y_i]+i_iJ_i+I_ij_i+s_{ij}T_{ji}+s_{ik}T_{ki}.
\fe
$\sigma$ descends from the action of $\mathfrak{gl}_{M_1}\times\mathfrak{gl}_{M_2}\times\mathfrak{gl}_{M_3}$, and explicitly it is written as 
\ie 
(\Lambda_i,\Lambda_j)\mapsto([\Lambda_i,X_i],[Y_i,\Lambda_i],\Lambda_iI_i,-J_i\Lambda_i,\Lambda_jS_{ij}-S_{ij}\Lambda_i,\Lambda_iT_{ji}-T_{ji}\Lambda_j).
\fe
It is easy to verify that $d\mu_{\bC}\circ \sigma=0$. Note that $\sigma$ is injective and $d\mu_{\bC}$ is surjective by stability. Our moduli space $\cM^{1,1,1}(M_1,M_2,M_3)$ embeds into $\widetilde{\cM}^{1,1,1}(M_1,M_2,M_3)$ regularly, and is locally defined by $M_1M_2+M_2M_3+M_3M_1$ equations coming from matrix elements of $\mathcal T\in \mathrm{Hom}(\mathcal V_{j},\mathcal V_{i})$, so $\cM^{1,1,1}(M_1,M_2,M_3)$ has \textit{perfect obstruction theory} 
\ie
T\widetilde{\cM}^{1,1,1}(M_1,M_2,M_3)|_{\cM^{1,1,1}(M_1,M_2,M_3)}\to \mathrm{Hom}(\mathcal V_{2},\mathcal V_{1})\oplus \mathrm{Hom}(\mathcal V_{1},\mathcal V_{3})\oplus \mathrm{Hom}(\mathcal V_{3},\mathcal V_{2})
\fe
And this complex is quasi-isomorphic to the tangent complex $\mathbb{T}\cM^{1,1,1}(M_1,M_2,M_3)$. This obstruction theory can be written as a complex of tautological bundles:
\ie\label{tangent of our moduli2}
\bigoplus^3_{i=1}\mathrm{End}(\mathcal V_{i})\overset{\sigma}{\longrightarrow} \bigoplus^3_{i=1}\mathrm{End}(\mathcal V_{i})^{\oplus 2}\bigoplus_{i=1}^3 \mathcal V_{i} \bigoplus_{i=1}^3 \mathcal V_{i}^{*}\bigoplus_{1\leq i\leq j}^{3} \mathrm{Hom}(\mathcal V_{i},\mathcal V_{j})
\overset{d\mu_{\bC}}{\longrightarrow} \bigoplus^3_{i=1}\mathrm{End}(\mathcal V_{i})
\fe
The difference between \eqref{tangent of our moduli} and \eqref{tangent of our moduli2} is that $\mathrm{Hom}(\mathcal V_{i},\mathcal V_{j})$ with $i>j$ drop out and chain maps are restricted to the rest of components. In \eqref{tangent of our moduli}, $\sigma$ is injective by the stability condition, but $d\mu_{\bC}$ is not necessarily surjective. We shall show that $d\mu_{\bC}$ is not surjective at some points in $\cM^{1,1,1}(M_1,M_2,M_3)$.
\subsection{Torus fixed points}\label{sec:5.4}

Before starting the discussion of torus fixed points, let us introduce an auxiliary quiver. Consider the ``semi-infinite'' quiver with gauge nodes labeled with natural numbers $n\in \mathbb N$, and an arrow $S_n$ from the $n$'th node to the $n+1$'st node, and two self-loops $X_n,Y_n$ at the node $n$, and finally a framing at the zeroth node with arrows $I,J$. We impose moment map equations on these arrows:
\ie 
\left[X_n,Y_n\right]=0&,\; n>0;\\
\left[X_n,Y_n\right]+IJ=0&,\; n=0.
\fe
Suppose that the rank of the $n$' th gauge node is $N[n]$ and that $N[n]=0$ if $n\gg 0$, then denote by $\cM^K(N[\star])$ the quiver variety associated to the above quiver with gauge nodes ranks $N[0],N[1],\cdots$ and framing rank $K$, together with the stability conditions 
\ie 
\bC\langle X_n,Y_n\rangle\mathrm{Im}(S_{n-1})&=\bC^{N[n]},\; n>0\\
\bC\langle X_n,Y_n\rangle\mathrm{Im}(I)&=\bC^{N[n]},\; n=0.
\fe
\begin{figure}[H]
    \centering
    \vspace{-0.6cm}
    \includegraphics[width=10cm]{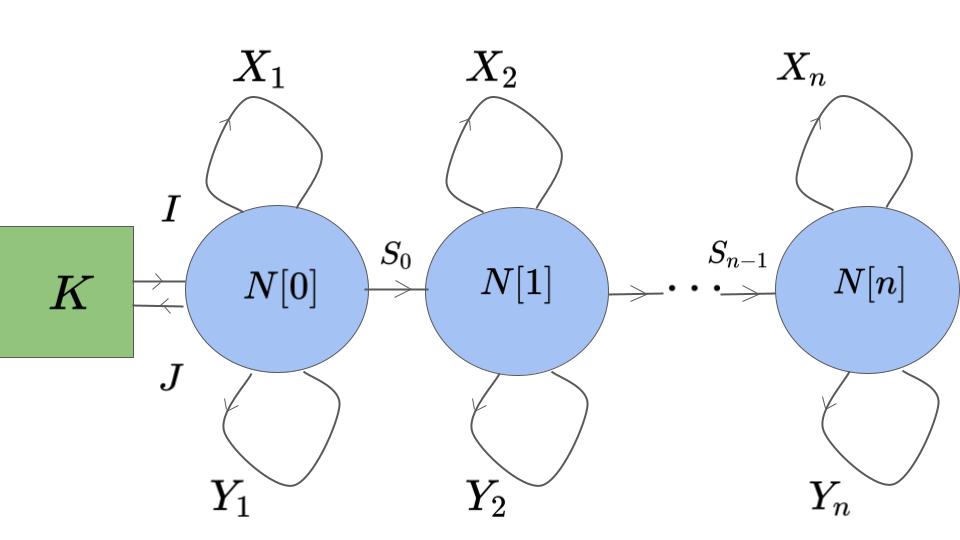}
    \caption{Auxiliary quiver}
    \vspace{-0.5cm}
    \label{fig:doublecontraction}
     \centering
\end{figure}
This is the prolongation of the quiver that appeared in \cite{DelZotto:2021ydd}. The following is obvious.
\begin{lemma}
Given a sequence of non-negative integers $N[\star]=\{N[0],\cdots,N[n-1],0,0\cdots\}$, and let $N[n]$ be a positive integer and form another sequence of non-negative integers as $\widetilde N[\star]=\{N[0],\cdots,N[n-1],N[n],0,0\cdots\}$, then $\cM^K(\widetilde N[\star])$ is a locally trivial fibration over $\cM^K(N[\star])$ with fibers isomorphic to $\mathrm{Quot}^{N[n]}(\cO^{N[n-1]}_{\bC^2})$.
\end{lemma}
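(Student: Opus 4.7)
The plan is to realize $\cM^K(\widetilde N[\star])$ as a relative Quot scheme over $\cM^K(N[\star])$, and to read off local triviality from the Zariski-local triviality of the tautological bundle $\mathcal V_{n-1}$.

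First I would construct the forgetful morphism $\pi:\cM^K(\widetilde N[\star])\to \cM^K(N[\star])$ that drops the vector space at node $n$ together with the endomorphisms $X_n,Y_n$ and the incoming arrow $S_{n-1}$. One has to check that the remaining data still defines a stable $N[\star]$-representation. The only potentially affected moment map equations are those at nodes $n-1$ and $n$; but in the oriented chain there is no backward arrow from $n$ to $n-1$, so $S_{n-1}$ does not enter the equation at node $n-1$, and the equation at node $n$ simply disappears upon forgetting $X_n,Y_n$. Likewise, the stability conditions at nodes $0,\dots,n-1$ only involve arrows that are retained. Hence $\pi$ is well-defined and $\prod_{m\le n-1}\mathrm{GL}_{N[m]}$-equivariant, so it descends to a morphism of GIT quotients.

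Second, I would identify the fiber over a point $p\in \cM^K(N[\star])$. The remaining data consist of a commuting pair $(X_n,Y_n)$ acting on $\bC^{N[n]}$, together with a linear map $S_{n-1}:\mathcal V_{n-1}|_p\to \bC^{N[n]}$ whose image generates $\bC^{N[n]}$ under the $\bC[X_n,Y_n]$-action, modulo the $\mathrm{GL}_{N[n]}$-action by change of basis on $\bC^{N[n]}$. Such data are exactly a length-$N[n]$ quotient $\bC[X,Y]\otimes \mathcal V_{n-1}|_p\twoheadrightarrow \bC^{N[n]}$ up to isomorphism of the target, i.e., a $\bC$-point of the punctual Quot scheme $\mathrm{Quot}^{N[n]}(\mathcal O_{\bC^2}\otimes\mathcal V_{n-1}|_p)$. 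After choosing a trivialization $\mathcal V_{n-1}|_p\cong\bC^{N[n-1]}$, this becomes $\mathrm{Quot}^{N[n]}(\mathcal O^{N[n-1]}_{\bC^2})$.

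Third, I would globalize by forming the relative Quot scheme $\mathcal Q:=\mathrm{Quot}^{N[n]}(\mathcal V_{n-1}\boxtimes \mathcal O_{\bC^2})$ over $\cM^K(N[\star])$, which parametrizes length-$N[n]$ coherent quotients supported on the $\bC^2$ factor. The universal property of $\mathcal Q$, combined with the pointwise fiber identification above, upgrades to a canonical isomorphism $\cM^K(\widetilde N[\star])\cong \mathcal Q$ over $\cM^K(N[\star])$. Since $\mathcal V_{n-1}$ is locally free, it is Zariski-locally trivial; over any open $U\subset\cM^K(N[\star])$ with $\mathcal V_{n-1}|_U\cong\mathcal O_U^{N[n-1]}$, base change for Quot gives $\mathcal Q|_U\cong U\times \mathrm{Quot}^{N[n]}(\mathcal O^{N[n-1]}_{\bC^2})$, yielding the desired local trivialization. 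The main potential obstacle I see is the faithful translation of the stability condition at the newly added node into the surjectivity defining the Quot functor (and its family version, using the openness of the stability locus inside the affine parameter space); once this dictionary is in place, everything reduces to formal properties of Grothendieck's Quot scheme.
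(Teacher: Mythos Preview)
Your argument is correct. The paper does not actually supply a proof of this lemma; it simply declares ``The following is obvious'' and moves on. Your write-up spells out precisely the mechanism the authors presumably have in mind: the forgetful map is well-defined because the moment map equations $[X_m,Y_m]=0$ at the surviving nodes do not involve $S_{n-1}$, the fiber datum $(X_n,Y_n,S_{n-1})$ with $[X_n,Y_n]=0$ and the cyclicity condition $\bC\langle X_n,Y_n\rangle\,\mathrm{Im}(S_{n-1})=\bC^{N[n]}$ modulo $\mathrm{GL}_{N[n]}$ is exactly a point of $\mathrm{Quot}^{N[n]}(\cO_{\bC^2}^{N[n-1]})$, and local triviality follows from Zariski-local triviality of $\mathcal V_{n-1}$ together with base change for the relative Quot scheme. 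There is nothing further to compare.
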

This lemma tells us that $\cM^K(N[\star])$ is built from inductive fibrations with fibers isomorphic to $\mathrm{Quot}^{N[n]}(\cO^{N[n-1]}_{\bC^2})$ and the starting base is $\cM^K(\{N[0],0,0\cdots\})$, which is isomorphic to the ADHM moduli space $\cM(N[0],K)$.

Returning to our quiver in Figure \ref{fig:doublecontraction}, we introduce some notation of the tori acting on it. Define the action of tori $\mathbf{T}_{i},i\in \{1,2,3\}$, $\tilde{\mathbf{T}}_i,i\in\{1,2,3\}$ as follows:
\ie 
a_i\in \mathbf{T}_{i}&:(X_i,Y_i)\mapsto (a_iX_i,a_i^{-1}Y_i),\\
b_i\in\tilde{\mathbf{T}}_{i}&:S_{ij}\mapsto b_iS_{ij},
\fe
and denote by $\mathbf{T}_I,\tilde{\mathbf{T}}_I,I\in\{1,2,3\}$ the torus $\prod_{i\in I}\mathbf{T}_i\times\tilde{\mathbf{T}}_i$. 

The next proposition follows from the same argument as the proof of \cite[Proposition 2.3.1]{Maulik:2012}
\begin{proposition}
Assume that $\xi>0$, then the $\prod_{i\in I}\tilde{\mathbf{T}}_i$ fixed points have a disjoint union decomposition:
\ie 
\cM^{1,1,1}(M_1,M_2,M_3)^{\prod_{i\in I}\tilde{\mathbf{T}}_i}=\bigsqcup_{\substack{\sum_{n\ge 0} M_1[3n]+M_2[3n+2]+M_3[3n+1]=M_1\\ \sum_{n\ge 0} M_2[3n]+M_3[3n+2]+M_1[3n+1]=M_2\\ \sum_{n\ge 0} M_3[3n]+M_1[3n+2]+M_2[3n+1]=M_3}}\cM^{1}(M_1[\star])\times\cM^{1}(M_2[\star])\times \cM^{1}(M_3[\star]).
\fe
Here $M_i[\star],i=1,2,3$ are sequences of nonnegative integers, and $\cM^{1}(M_i[\star]),i=1,2,3$ are quiver varieties associated with the auxiliary quiver introduced in the beginning of this subsection.
\end{proposition}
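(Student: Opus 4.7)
The plan is to adapt the weight-decomposition argument of \cite[Proposition 2.3.1]{Maulik:2012} to our three-node cyclic quiver; for concreteness I take $I=\{1,2,3\}$, since the remaining cases are analogous or reduce to this one. A $\tilde{\mathbf{T}}:=\prod_{i=1}^{3}\tilde{\mathbf{T}}_i$ fixed point of the GIT quotient $\cM^{1,1,1}(M_1,M_2,M_3)$ lifts, after a gauge transformation on a representative of its orbit in $\mu_{\mathbb{C}}^{-1}(0)^{\mathrm{stable}}$, to a cocharacter $\rho:\tilde{\mathbf{T}}\to\prod_i\mathrm{GL}(V_i)$ compensating the torus action on the quiver data. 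This yields weight decompositions $V_i=\bigoplus_{\chi\in\mathbb{Z}^3}V_i[\chi]$. Since $X_i,Y_i,I_i,J_i$ are $\tilde{\mathbf{T}}$-invariant they preserve weights, while $S_{k,k+1}$ carries weight $e_k$ and sends $V_k[\chi]\to V_{k+1}[\chi+e_k]$; the framings are of weight zero, so $\mathrm{Im}(I_j)\subset V_j[0]$.

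Next I invoke the stability condition \eqref{eqn: stability} and set $V^{(j)}:=\mathbb{C}\langle X,Y,S\rangle\cdot\mathrm{Im}(I_j)$, so that $V_1\oplus V_2\oplus V_3=V^{(1)}+V^{(2)}+V^{(3)}$. Following the cyclic path $V_1\xrightarrow{S_{12}}V_2\xrightarrow{S_{23}}V_3\xrightarrow{S_{31}}V_1$, applying $k$ consecutive $S$-maps to $\mathrm{Im}(I_j)$ lands in $V_{j+k\bmod 3}$ at weight $e_j+e_{j+1}+\cdots+e_{j+k-1}$ (indices mod $3$). The three resulting families of weights for $j=1,2,3$ lie in three distinct cosets of $\mathbb{Z}(e_1+e_2+e_3)$ inside $\mathbb{Z}^3$, so the three $V^{(j)}$ intersect trivially and assemble into a canonical direct-sum decomposition $V_i=V_i^{(1)}\oplus V_i^{(2)}\oplus V_i^{(3)}$.

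Setting $W_k^{(j)}:=V_{j+k\bmod 3}^{(j)}$, each $V^{(j)}$ acquires the structure of an auxiliary-quiver representative: $X,Y$ restrict to the self-loops at each $W_k^{(j)}$, the appropriate $S$ restricts to the step arrow $W_{k-1}^{(j)}\to W_k^{(j)}$, and $I_j,J_j$ become the framing at $W_0^{(j)}$. By $\tilde{\mathbf{T}}$-equivariance and the weight-zero nature of $\mu_{\mathbb{C}}$, its components at distinct weights decouple and restrict to the ADHM moment map equations $[X_k,Y_k]=0$ (and $[X_0,Y_0]+IJ=0$ at the framing node) of the three auxiliary quivers. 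The global stability descends to auxiliary stability by the combinatorial observation that any vector in $W_k^{(j)}$ is produced by a word in $X,Y,S$ with exactly $k$ occurrences of $S$ applied to $\mathrm{Im}(I_j)$, and splitting such a word at its rightmost $S$---necessarily $S_{j+k-1,j+k}$---gives $W_k^{(j)}=\mathbb{C}\langle X_k,Y_k\rangle\cdot\mathrm{Im}(S_{k-1})$. Collecting dimensions of $V_1$ according to the rule $j+k\equiv 1\pmod 3$ yields $M_1=\sum_{n\ge 0}M_1[3n]+M_2[3n+2]+M_3[3n+1]$, and the other two constraints follow by cyclic symmetry; the inverse construction---gluing three auxiliary-quiver representatives into a single cyclic-quiver datum by the prescribed weight stacking---is functorial and manifestly inverse to the decomposition.

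The main technical point I anticipate is the restriction-of-stability argument above, which hinges on the fact that each target weight uniquely determines the last $S$-map in the corresponding word. The remaining ingredients---the cocharacter lift of the torus action, moment-map compatibility, and the inverse construction---are either standard GIT facts (after \cite{Maulik:2012}) or follow routinely once the weight decomposition is in place.
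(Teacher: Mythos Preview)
Your proposal is correct and is precisely the weight-decomposition argument of \cite[Proposition~2.3.1]{Maulik:2012} adapted to the cyclic three-node quiver, which is exactly what the paper invokes (the paper gives no further detail beyond that reference). One small point of phrasing: the family of weights supporting a single $V^{(j)}$ is not a \emph{single} coset of $\mathbb{Z}(e_1+e_2+e_3)$ but rather three cosets, one for each ambient gauge node; what matters---and what you use---is that for fixed $i$ the three pieces $V_i^{(1)},V_i^{(2)},V_i^{(3)}$ sit in pairwise distinct weight spaces, which is clear from your description.
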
 

More generally, we have a nine dimensional torus $\prod^3_{i=1}\mathbf{T}'_i\times\prod^3_{i=1} \tilde{\mathbf{T}}_i$ acting on $\cM^{1,1,1}(M_1,M_2,M_3)$, which extends the action of $\mathbf{T}_{I}\times\tilde{\mathbf{T}}_I$. The modified action is defined by 
\ie 
(a^{(1)}_i,a^{(2)}_i)\in \mathbf{T}'_{i}&:(X_i,Y_i,I_i,J_i)\mapsto (a^{(1)}_iX,a^{(2)}_iY_i,I_i,a^{(1)}_ia^{(2)}_iJ_i)
\fe
Denote this torus by $\mathbf{T}_{\mathrm{edge}}$, then we have the following:
\begin{proposition}\label{counting fixed pts}
Under the above assumptions, $\cM^{1,1,1}(M_1,M_2,M_3)^{\mathbf{T}_{\mathrm{edge}}}$ is disjoint union of points \footnote{In the scheme-theoretical sense, i.e. there is no infinitesimal deformation of those points inside the $\mathbf{T}_{\mathrm{edge}}$-fixed points loci.}, points one-to-one correspond to $1+\prod_{j=0}^{n-1}M_1[3j]M_2[3j+1]M_3[3j+2]+1+\prod_{j=0}^{n-1}M_2[3j]M_3[3j+1]M_1[3j+2]+1+\prod_{j=0}^{n-1}M_3[3j]M_1[3j+1]M_2[3j+2]$ copies of Young diagrams. Moreover, the generating function for the number of $\mathbf{T}_{\mathrm{edge}}$-fixed points can be written as
\ie\label{fixed pts generating function}
\sum_{M_1,M_2,M_3}\#\left(\cM^{1,1}(M_1,M_2,M_3)^{\mathbf{T}_{\mathrm{edge}}}\right)x^{M_1}y^{M_2}z^{M_3}=\eqref{combinatorics},
\fe
\end{proposition}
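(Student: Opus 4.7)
The plan is to reduce the computation step by step using the two results established just before this proposition: the $\tilde{\mathbf{T}}$--fixed decomposition and the iterated fibration lemma for the auxiliary quiver. First I would use the previous proposition on $\prod_{i\in I}\tilde{\mathbf T}_i$--fixed points with $I=\{1,2,3\}$ to write
\ie\nonumber
\cM^{1,1,1}(M_1,M_2,M_3)^{\prod\tilde{\mathbf T}_i}=\bigsqcup_{M_i[\star]} \cM^1(M_1[\star])\times\cM^1(M_2[\star])\times\cM^1(M_3[\star]),
\fe
so that I only need to describe the further $\prod_i\mathbf{T}'_i$--fixed locus on each component, and the three factors are treated independently since $\mathbf{T}'_i$ acts only on the $i$-th factor.

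The next step is an inductive analysis of each $\cM^1(N[\star])^{\mathbf{T}'_i}$ via the fibration lemma. The base of the induction is $\cM^1(\{N[0],0,\ldots\})\cong \cM(N[0],1)$, the rank one ADHM moduli, whose $\mathbf{T}'_i$--fixed points are the classical Nakajima fixed points parametrised by Young diagrams of size $N[0]$; for generic independent weights $a^{(1)}_i,a^{(2)}_i$ these are isolated scheme-theoretically, as the characters of the tangent bundle \eqref{tangent of our moduli2} at the fixed point have no zero weight. For the inductive step, the fibre $\mathrm{Quot}^{N[n]}(\cO^{N[n-1]}_{\bC^2})$ inherits a torus action from $\mathbf{T}'_i$ (acting on $\bC^2$) together with the residual torus acting on the $N[n-1]$--dimensional framing coming from the fixed point chosen at level $n-1$; the fixed locus on each Quot fibre is then the standard collection of $N[n-1]$--tuples of Young diagrams of total size $N[n]$. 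By compatibility of the torus action with the fibration structure, $\cM^1(N[\star])^{\mathbf{T}'_i}$ is a finite disjoint union of points, indexed by a Young diagram at level $0$ together with a tuple of Young diagrams at each subsequent level.

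Having enumerated the fixed points on each factor, the formula in the proposition follows by multiplying the three contributions, reorganising the cyclic indices $3j,3j+1,3j+2$ to match the factor they label, and collecting the resulting combinatorial identities into the generating function \eqref{fixed pts generating function}. Scheme-theoretic isolatedness at the top level follows once more from an explicit character computation: restricting \eqref{tangent of our moduli2} to any one of the above fixed points and expanding in the $\mathbf{T}_{\mathrm{edge}}$--characters of $\cV_i$, the terms $\mathrm{Hom}(\cV_i,\cV_j)$ contribute characters of the form $b_i$ times ADHM-type weights which are all nonzero by the genericity of $a^{(1)}_i,a^{(2)}_i,b_i$, and the moment map differential $d\mu_{\bC}$ together with $\sigma$ cancels out any potential invariant direction inside $\bigoplus\mathrm{End}(\cV_i)^{\oplus 2}\oplus\cV_i\oplus\cV_i^{*}$, exactly as in the rank-one ADHM case.

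The main obstacle is the inductive bookkeeping: one has to verify that the residual torus at each level of the fibration is generic enough to isolate the Quot-fiber fixed points, and that the character of the obstruction part of \eqref{tangent of our moduli2} (coming from $\mathrm{Hom}(\cV_i,\cV_j)$ with $i\neq j$) does not cause unexpected cancellations that would merge two nominally distinct fixed points or produce an invariant tangent direction. Apart from this, the combinatorial identity leading to \eqref{fixed pts generating function} is a straightforward product of three \emph{coloured} MacMahon-type series, one per gauge node, weighted by $x,y,z$ according to the cyclic assignment $3j,3j+1,3j+2$.
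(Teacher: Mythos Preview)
Your overall strategy matches the paper's: first take $\prod_i\tilde{\mathbf T}_i$-fixed points to reduce to products $\cM^1(M_1[\star])\times\cM^1(M_2[\star])\times\cM^1(M_3[\star])$, then peel off one node at a time using the Quot-fibration of Lemma~2, so that fixed points at level $n$ are tuples of Young diagrams indexed by the boxes of the Young diagrams chosen at level $n-1$. The paper does exactly this, writing explicit projections $\pi_1,\pi_2$ and identifying each fibre with a product of rank-one ADHM spaces; the generating function \eqref{combinatorics} then drops out by iterating identities of the form $\sum_M p(M)\bigl(x/(y;y)\bigr)^M$.

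There is, however, a real error in your torus bookkeeping. You write that ``$\mathbf{T}'_i$ acts only on the $i$-th factor'' and proceed to analyse $\cM^1(N[\star])^{\mathbf{T}'_i}$ for a \emph{single} $i$. This is false. In the auxiliary quiver $\cM^1(M_1[\star])$, the loops $X_n,Y_n$ at level $n$ descend from the original gauge node $(n\bmod 3)+1$, so $\mathbf{T}'_1$ scales only the loops at levels $0,3,6,\ldots$, while $\mathbf{T}'_2$ and $\mathbf{T}'_3$ act at levels $1,4,\ldots$ and $2,5,\ldots$ respectively. If you take only $\mathbf{T}'_1$-fixed points, the residual torus coming from the level-0 framing separates the level-1 Quot fibre into a product of Hilbert schemes $\mathrm{Hilb}^{N_i}(\bC^2)$, but since $\mathbf{T}'_1$ does not touch $X_2,Y_2$ these Hilbert schemes are \emph{not} further reduced to points; you get a positive-dimensional fixed locus. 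The paper therefore works with the full $\mathbf{T}'_1\times\mathbf{T}'_2\times\mathbf{T}'_3$ on each factor $\cM^1(M_i[\star])$, and this cyclic action of three independent two-tori is precisely what forces the iterated Young-diagram labels $Y^{(1)},Y^{(2)},Y^{(3)}$ and produces the three-variable nested structure in \eqref{combinatorics}. Once you correct this, your inductive argument and the paper's coincide.
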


\begin{proof}
Recall that the edge torus fixed points of the ADHM quiver variety $\cM^{1}(M_1[0])$ is disjoint union of points that are one-to-one correspond to Young diagrams $Y^{(1)}_{0,0}$ with $|Y^{(1)}_{0,0}|=M_1[0]$. More precisely, a fixed point $p\in \cM^{1}(M_1[0])^{\mathbf{T}'_1}$ corresponds to a unique $\mathbf{T}'_1$-equivariant quiver representation of which $\mathbb C^{M_1[0]}$ decomposes into $\mathbf{T}'_1$ weight spaces
$$\mathbb C^{M_1[0]}=\bigoplus_{(i,j)\in Y^{(1)}_{0,0}} \mathbb C\cdot e_{i,j},$$
where $(i,j)\in Y^{(1)}_{0,0}$ is the box $(i,j)$ in the Young diagram $Y^{(1)}_{0,0}$ and $e_{i,j}$ is a vector of weight $(-i,-j)$ under the $\mathbf{T}'_1$ action. We set $e_{i,j}=0$ if $(i,j)\notin Y^{(1)}_{0,0}$, then $X$ maps $e_{i,j}$ to $e_{i+1,j}$ and $Y$ maps $e_{i,j}$ to $e_{i,j+1}$ and $I$ maps the flavor vector to $e_{0,0}$ and all other arrows are zero.

Returning to our situation, we have 
\ie
\cM^{1,0,0}(M_1[\star])\times \cM^{0,1,0}(M_2[\star])\times \cM^{0,0,1}(M_3[\star]).
\fe
We will first focus on $\cM^{1,0,0}(M_1[\star])$ as
other factors $\cM^{0,1,0}(M_2[\star])$ and $\cM^{0,0,1}(M_3[\star])$ can be easily obtained by applying a proper $S_3$ action on the result of $\cM^{1,0,0}(M_1[\star])$. As it has an 3n-node tail, we would like to start from $n=1$ example. 

First, notice that there is a map 
\ie
\pi_1:\cM^{1,0,0}(M_1[0],M_2[1],M_3[2])^{\mathbf{T}'_1\times\mathbf{T}'_2\times\mathbf{T}'_3}\to \cM^{1,0}(M_1[0],M_2[1])^{\mathbf{T}'_1\times\mathbf{T}'_2}
\fe
and the latter is a disjoint union of points labeled by a Young diagram. Fix Young diagrams $Y_{k,0}^{(2)}$ so that $\sum_{k=1}^{M_1[0]}|Y_{k,0}^{(2)}|=M_2[1]$. Then $Y_{k,0}^{(2)}$ corresponds to a point $p$ in $\cM^{1,0,0}(M_1[0],M_2[1],0)^{\mathbf{T}'_1\times \mathbf{T}'_2}$, and a point in the preimage of $p$ in $\cM^{1,0,0}(M_1[0],M_2[1],M_3[2])^{\mathbf{T}'_1\times\mathbf{T}'_2\times\mathbf{T}'_3}$ corresponds to a $\mathbf{T}'_1\times\mathbf{T}'_2\times\mathbf{T}'_3$-equivariant quiver representation of which $\mathbb C^{M_2[1]}$ decomposes into $\mathbf{T}'_1\times \mathbf{T}'_2\times \mathbf{T}'_3$ weight spaces
$$\mathbb C^{M_2[1]}=\bigoplus_{(i,j)\in Y_{k,0}^{(2)}} \mathbb C\cdot e_{i,j}.$$
Here $(i,j)\in Y_{k,0}^{(2)}$ is the $(i,j)$-th box in the Young diagram $Y_{k,0}^{(2)}$ and $e_{i,j}$ is a vector of weight $(-i,-j)$ under the $\mathbf{T}'_2$ action and of weight zero under the $\mathbf{T}'_1$, $\mathbf{T}'_3$ action.

Next, we decompose $\mathbb C^{M_3[2]}$ into $\mathbf{T}'_2$ weight spaces:
$$\mathbb C^{M_3[2]}=\bigoplus_{(i,j)\in Y_{k,0}^{(2)}} V_{i,j},$$
where $\mathbf{T}'_2$ weight $(-i,-j)$. By equivariance, $S_{23}$ map $e_{i,j}$ to $V_{i,j}$ and $X_3,Y_3$ map $V_{i,j}$ to itself. Furthermore, by stability, we have 
\ie
\mathbb C\la X_3, Y_3\ra (S_{23})(e_{i,j})= V_{i,j}.
\fe
This shows that the restriction of $(X_3, Y_3)$ to $V_{i,j}$ is stable ADHM quiver representations respectively, where the flavour vector is $e_{i,j}$. In other words, we show that the preimage of $p$ in $\cM^{1,0,0}(M_1[0],M_2[1],M_3[2])^{\mathbf{T}'_1\times\mathbf{T}'_2\times\mathbf{T}'_3}$ is isomorphic to
\ie\label{1stdecomp}
\bigsqcup_{\substack{N_1,\cdots,N_{M_2[1]}\ge 0\\N_1+\cdots +N_{M_2[1]}=M_3[2]}}\prod_{i=1}^{M_2[1]}
\cM^{0,0,1}(0,0,N_i)^{\mathbf{T}'_3}.
\fe
Each of the factors $\cM^{0,0,1}(0,0,N_i)^{\mathbf{T}'_3}$ one-to-one corresponds to Young diagrams $Y^{(3)}_{i,0}$ with $N_i$ boxes. Therefore, we prove that $\cM^{1,1,0}(M_1[0],M_2[1],M_3[2])^{\mathbf{T}_{\mathrm{edge}}}$ is a disjoint union of points that correspond one-to-one to $M_2[1]$ copies of the Young diagrams $Y^{(3)}_{1,0},\cdots, Y^{(3)}_{M_2[1],0}$, where $\sum_{i=1}^{M_2[1]}|Y^{(3)}_{i,0}|=M_3[2]$.

Next, consider the second map \ie
\pi_2:\cM^{1,0,0}(M_1[0],M_2[1])^{\mathbf{T}'_1\times\mathbf{T}'_2}\to \cM^{1}(M_1[0])^{\mathbf{T}'_1}.\fe Let $p'$ be a point in $\cM^{1}(M_1[0])^{\mathbf{T}'_1}$. Similarly to the above, one can argue that $\pi_2^{-1}(p')$ is  
\ie\label{2nddecomp}
\bigsqcup_{\substack{N_1,\cdots,N_{M_1[0]}\ge 0\\N_1+\cdots +N_{M_1[0]}=M_2[1]}}\prod_{i=1}^{M_1[0]}
\cM^{0,1,0}(0,N_i,0)^{\mathbf{T}'_2}.
\fe

We can then combine \eqref{1stdecomp} and \eqref{2nddecomp} and conclude that $\cM^{1,0,0}(M_1[0],M_2[1],M_3[2])^{\bf{T}_{\text{edge}}}$ is disjoint union of points which are in one-to-one correspondence with $1+M_1[0]M_2[1]$ copies of Young diagrams 
\ie
Y^{(1)}_{0,0},~Y^{(2)}_{1,0}\left[Y^{(3)}_{1,0}, \cdots, Y^{(3)}_{M_2[1],0}\right], \cdots, Y^{(2)}_{M_1[0],0}\left[Y^{(3)}_{1,0}, \cdots, Y^{(3)}_{M_2[1],0}\right],\fe
where 
\ie
\sum^{M_1[0]}_{i=1}|Y^{(2)}_{i,0}|=M_2[1],\quad \sum^{M_2[1]}_{i=1}|Y^{(3)}_{i,0}|=M_3[2].
\fe
Here, the notation $Y\left[Y'_1\ldots Y'_n\right]$ indicates that we assign $Y'_i$ for each box $i$ of Y when $|Y|=n$. Now that we have finished analyzing the basic 3-node unit of the 3n node quiver $\cM^{1,0,0}(M_1[\star])$. 

The general pattern is that each of the Young diagrams in i-th node of the quiver corresponds to one of the boxes in one of the Young diagrams in $(i-1)-$th node. With this intuition, one can proceed to the $3n$-th node and conclude that there are 
\ie\nonumber
1+M_1[0]M_2[1]M_3[2]\cdots M_1[3n-3]M_2[3n-2]
\fe
copy of Young diagrams, where
\ie
\sum^{M_1[3j-3]}_{i=1}|Y^{(2)}_{i,j}|=M_2[3j-2],\quad\sum^{M_2[3j-2]}_{i=1}|Y^{(3)}_{i,j}|=M_3[3j-1],\quad\sum^{M_3[3j-1]}_{i=1}|Y^{(1)}_{i,j}|=M_1[3j].
\fe

Similarly, the space $\cM^{0,1,0}(M_2[\star])$ consists of a set of fixed points represented by 
\ie\nonumber
1+M_2[0]M_3[1]M_1[2]\cdots M_2[3n-3]M_3[3n-2]
\fe
copy of Young diagrams, where
\ie
\sum^{M_2[3j-3]}_{i=1}|Y^{(3)}_{i,j}|=M_3[3j-2],\quad\sum^{M_3[3j-2]}_{i=1}|Y^{(1)}_{i,j}|M_1[3j-1],\quad\sum^{M_1[3j-1]}_{i=1}|Y^{(2)}_{i,j}|=M_2[3j].
\fe
Lastly, the space $\cM^{0,0,1}(M_3[\star])$ consists of a set of fixed points represented by 
\ie\nonumber
1+M_3[0]M_1[1]M_2[2]\cdots M_3[3n-3]M_1[3n-2]
\fe
copy of Young diagrams, where
\ie
\sum^{M_3[3j-3]}_{i=1}|Y^{(1)}_{i,j}|=M_1[3j-2],\quad\sum^{M_1[3j-2]}_{i=1}|Y^{(2)}_{i,j}|=M_2[3j-1],\quad\sum^{M_2[3j-1]}_{i=1}|Y^{(3)}_{i,j}|=M_3[3j].
\fe

We now want to enumerate all fixed points. To do so, first notice that 
\ie
    \sum_{M_1} \left(\cM^{1,0,0}(M_1,0,0)^{\mathbf{T}_{\mathrm{edge}}}\right)x^{M_1}=\sum_{M_1}p(M_1)x^{M_1}=\prod_{i\ge 1}^{\infty}\frac{1}{1-x^i}=\frac{1}{(x;x)},
\fe
where $p(n)$ is Euler's partition function counting the number of partitions of $n$ in positive integers and $\left(a;y\right)_\infty=\prod^\infty_{k=0}(1-ay^k)$ is the Pochhammer symbol.

Using the above information, we can write down a concise form of the generating function of the fixed points labeled by the Young diagrams.
\ie\nonumber
&\sum_{M_1,M_2,M_3}\left(\cM^{1,0,0}(M_1[\star])\times \cM^{0,1,0}(M_2[\star])\times \cM^{0,0,1}(M_3[\star])\right)x^{M_1}y^{M_2}z^{M_3}\\
=&\sum_{M_1[0]}\left(\cM^{1,0,0}(M_1[0],0,0)^{\mathbf{T}'_{1}}\right)\sum_{N_1,\cdots,N_{M_1[0]}\ge 0}\prod_{i=1}^{M_1[0]}\left(\cM^{0,1,0}(0,N_{i},0)^{\mathbf{T}'_2}\right)x^{M_1[0]}y^{N_1+\cdots+N_{M_1[0]}}\\
\times&\sum_{M_2[1]}\left(\cM^{0,1,0}(0,M_2[1],0)^{\mathbf{T}'_{2}}\right)\sum_{N_1,\cdots,N_{M_2[1]}\ge 0}\prod_{i=1}^{M_2[1]}\left(\cM^{0,0,1}
(0,0,N_i)^{\bf{T}'_3}\right)y^{M_2[1]}z^{N_1+\cdots+N_{M_2[1]}}\times\ldots\\
\times&\sum_{M_2[3j-2]}\left(\cM^{0,1,0}(0,M_2[3j-2],0)^{\mathbf{T}'_{2}}\right)\sum_{N_1,\cdots,N_{M_2[3j-2]}\ge 0}\prod_{i=1}^{M_2[3j-2]}\left(\cM^{0,0,1}
(0,0,N_i)^{\bf{T}'_3}\right)y^{M_2[3j-2]}z^{N_1+\cdots+N_{M_2[3j-2]}}\\
\\
\times&\sum_{M_2[0]}\left(\cM^{0,1,0}(0,M_2[0],0)^{\mathbf{T}'_{2}}\right)\sum_{N_1,\cdots,N_{M_2[0]}\ge 0}\prod_{i=1}^{M_2[0]}\left(\cM^{0,0,1}(0,0,N_{i})^{\mathbf{T}'_3}\right)y^{M_2[0]}z^{N_1+\cdots+N_{M_2[0]}}\\
\times&\sum_{M_3[1]}\left(\cM^{0,0,1}(0,0,M_3[1])^{\mathbf{T}'_{3}}\right)\sum_{N_1,\cdots,N_{M_3[1]}\ge 0}\prod_{i=1}^{M_3[1]}\left(\cM^{1,0,0}
(N_i,0,0)^{\bf{T}'_1}\right)z^{M_3[1]}x^{N_1+\cdots+N_{M_3[1]}}\times\ldots\\
\times&\sum_{M_3[3j-2]}\left(\cM^{0,0,1}(0,0,M_3[3j-2])^{\mathbf{T}'_{3}}\right)\sum_{N_1,\cdots,N_{M_3[3j-2]}\ge 0}\prod_{i=1}^{M_3[3j-2]}\left(\cM^{1,0,0}
(N_i,0,0)^{\bf{T}'_1}\right)z^{M_3[3j-2]}x^{N_1+\cdots+N_{M_3[3j-2]}}\\
\times&\sum_{M_3[0]}\left(\cM^{0,0,1}(0,0,M_3[0])^{\mathbf{T}'_{3}}\right)\sum_{N_1,\cdots,N_{M_3[0]}\ge 0}\prod_{i=1}^{M_3[0]}\left(\cM^{1,0,0}(N_{i},0,0)^{\mathbf{T}'_1}\right)z^{M_3[0]}x^{N_1+\cdots+N_{M_3[0]}}\\
\times&\sum_{M_1[1]}\left(\cM^{1,0,0}(M_1[1],0,0)^{\mathbf{T}'_{1}}\right)\sum_{N_1,\cdots,N_{M_1[1]}\ge 0}\prod_{i=1}^{M_1[1]}\left(\cM^{1,0,0}
(0,N_i,0)^{\bf{T}'_2}\right)x^{M_1[1]}y^{N_1+\cdots+N_{M_1[1]}}\times\ldots\\
\times&\sum_{M_1[3j-2]}\left(\cM^{1,0,0}(M_1[3j-2],0,0)^{\mathbf{T}'_{1}}\right)\sum_{N_1,\cdots,N_{M_1[3j-2]}\ge 0}\prod_{i=1}^{M_1[3j-2]}\left(\cM^{1,0,0}
(0,N_i,0)^{\bf{T}'_2}\right)x^{M_1[3j-2]}y^{N_1+\cdots+N_{M_1[3j-2]}}
\fe
\ie\label{combinatorics}
=&\sum_{M_1[0]}p(M_1[0])\left(\frac{x}{(y;y)}\right)^{M_1[0]}
\sum_{M_2[1]}p(M_2[1])\left(\frac{y}{(z;z)}\right)^{M_2[1]}\cdots\sum_{M_2[3n-2]}p(M_2[3n-2])\left(\frac{y}{(z;z)}\right)^{M_2[3n-2]}\\
\times&\sum_{M_2[0]}p(M_2[0])\left(\frac{y}{(z;z)}\right)^{M_2[0]}
\sum_{M_3[1]}p(M_3[1])\left(\frac{z}{(x;x)}\right)^{M_3[1]}\cdots\sum_{M_3[3n-2]}p(M_3[3n-2])\left(\frac{z}{(x;x)}\right)^{M_3[3n-2]}\\
\times&\sum_{M_3[0]}p(M_3[0])\left(\frac{z}{(x;x)}\right)^{M_3[0]}
\sum_{M_1[1]}p(M_1[1])\left(\frac{x}{(y;y)}\right)^{M_1[1]}\cdots\sum_{M_1[3n-2]}p(M_1[3n-2])\left(\frac{x}{(y;y)}\right)^{M_1[3n-2]}.
\fe
\end{proof}
\subsection{Equivariant K-theory index}\label{sec:5.5}
The equivariant K-theory index of the moduli space $\cM^{1,1,1}(M_1,M_2,M_3)$ is defined by
\ie 
\chi(\cM^{1,1,1}(M_1,M_2,M_3),\mathcal O^{\mathrm{vir}})
\fe
where $\mathcal O^{\mathrm{vir}}$ is the virtual structure sheaf of $\cM^{1,1,1}(M_1,M_2,M_3)$. Since our moduli space $\cM^{1,1,1}(M_1,M_2,M_2)$ is l.c.i by Proposition \ref{prop: regular embedding}, the virtual structure sheaf is the actual structure sheaf, i.e. $\mathcal O^{\mathrm{vir}}=\mathcal O$. The equivariant K-theory partition function is defined by
\ie
Z_{1,1,1}&=\sum_{M_1,M_2,M_3}\chi(\cM^{1,1,1}(M_1,M_2,M_3),\mathcal O^{\mathrm{vir}})x^{M_1}y^{M_2}z^{M_3}\\
&=\sum_{M_1,M_2,M_3}\chi(\cM^{1,1,1}(M_1,M_2,M_3))x^{M_1}y^{M_2}z^{M_3}
\fe
Using equivariant localization, the $\mathbf{T}_{\mathrm{edge}}$-equivariant K-theory index of $\cM^{1,1,1}(M_1,M_2,M_3)$ can be written in terms of $\mathbf{T}_{\mathrm{edge}}$-fixed points:
\ie 
\chi(\cM^{1,1,1}(M_1,M_2,M_3),\mathcal O)=\chi\left(\cM^{1,1,1}(M_1,M_2,M_3)^{\mathbf{T}_{\mathrm{edge}}},\left(S^{\bullet}(T^{\mathrm{vir}})^*\right)|_{\cM^{1,1,1}(M_1,M_2,M_3)^{\mathbf{T}_{\mathrm{edge}}}}\right)
\fe
where $T^{\mathrm{vir}}$ is the virtual tangent bundle \eqref{tangent of our moduli}. 

Let us denote the equivariant parameters of $\mathbf{T}'_i$ by $r_{1,i},r_{2,i}$, and the equivariant parameters of $\tilde{\mathbf{T}}'_i$ by $s_{i}$. Then the $\mathbf{T}_{\mathrm{edge}}$-equivariant K-theory class of the virtual tangent bundle \eqref{tangent of our moduli} can be written as 
\ie\label{virtual tangent character}
\sum_{i=1}^3\cV_i+\cV_i^*r_{1,i}r_{2,i}-\cV_i\cV_i^*(1-r_{1,i})(1-r_{2,i})+s_{i}\cV_i\cV_{i+1}^*
\fe
where $\cV_i$ are universal bundles on $\cM^{1,1,1}(M_1,M_2,M_3)$ of rank $M_i$ and whenever the index is greater than 3, we take mod 3. Note that they are equivariant under the $\mathbf{T}_{\mathrm{edge}}$ action. At a fixed point labeled by
\ie
Y^{(1)}_{a,b},~Y^{(2)}_{a,b},~Y^{(3)}_{a,b},
\fe
where $b=0,\ldots,n-1$ and $a$ depend on $b$ and the superscript $(c)$ of $Y^{(c)}_{a,b}$.

The fibers of $\cV_i$ are representations of $\mathbf{T}_{\mathrm{edge}}$, and their $\mathbf{T}_{\mathrm{edge}}$ characters can be described as follows. Let us introduce the notation
\ie 
w\left(Y^{(1)}_{a,b}\right)=\sum_{(i,j)\in Y^{(1)}_{a,b}}r_{1,1}^{-i}r_{2,1}^{-j},\quad w\left(Y^{(2)}_{a,b}\right)=\sum_{(i,j)\in Y^{(2)}_{a,b}}r_{1,2}^{-i}r_{2,2}^{-j},\quad w\left(Y^{(3)}_{a,b}\right)=\sum_{(i,j)\in Y^{(3)}_{a,b}}r_{1,3}^{-i}r_{2,3}^{-j}.
\fe

We will focus on $\cM^{1}(M_1[\star])$ first. Using the description of the fixed points $\mathbf{T}_{\mathrm{edge}}$, we know that $Y^{(2)}_{a_2,b_2;i_1,j_1}$ is in one-to-one correspondence with $(i_1,j_1)$-th box in the Young diagram $Y^{(1)}_{a_1,b_1}$, where we used the notation $(a_2,b_2;i_1,j_1)$ to keep track of the framing of $Y^{(2)}_{a_2,b_2}$ at $(i_1,j_1)$-th box of $Y^{(1)}_{a_1,b_1}$. Similarly,  $Y^{(3)}_{a_3,b_3;i_2,j_2}$ is in one-to-one correspondence with a box $(i_2,j_2)$ in the diagram $Y^{(2)}_{a_2,b_2}$. Then the characters of $\cV^{(a_1,b_1)}_1$, $\cV_2^{(a_2,b_2)}$, $\cV^{(a_3,b_3)}_3$ are
\ie 
\cV_1&=w\left(Y^{(1)}_{a_1,b_1}\right)\text{ if }(a_1,b_1)=(0,0)\text{ else }s_{1}^{-1}\sum_{(i_3,j_3)\in Y^{(3)}_{a_3,b_3}}r_{1,3}^{-i_3}r_{2,3}^{-j_3}w\left(Y^{(1)}_{a_1,b_1;i_3,j_3}\right),\\ 
\cV_2&=s_2^{-1}\sum_{(i_1,j_1)\in Y^{(1)}_{a_1,b_1}}r_{1,1}^{-i_1}r_{2,1}^{-j_1}w\left(Y^{(2)}_{a_2,b_2;i_1,j_1}\right),~~\cV_3=s_{3}^{-1}\sum_{(i_2,j_2)\in Y^{(2)}_{a_2,b_2}}r_{1,2}^{-i_2}r_{2,2}^{-j_2}w\left(Y^{(3)}_{a_3,b_3;i_2,j_2}\right),
\fe
where $Y^{(1)}_{0,0}$ is the Young diagram that represents the fixed point of the vector space of dimension $M_1[0]$ corresponding to the first node of the elongated linear quiver $\cM^1(M_1[\star])$. It stands out from the others, since it is directly connected to the framing node.

Similarly, for $\cM^{1}(M_2[\star])$, characters of $\cV'_1$, $\cV'_2$, $\cV'_3$ are
\ie
\cV'_2&=w\left(Y^{(2)}_{a_2,b_2}\right)\text{ if }(a_2,b_2)=(0,0)\text{ else }s_{2}^{-1}\sum_{(i_1,j_1)\in Y^{(1)}_{a_1,b_1}}r_{1,1}^{-i_1}r_{2,1}^{-j_1}w\left(Y^{(2)}_{a_2,b_2;i_1,j_1}\right),\\ 
\cV'_3&=s_{3}^{-1}\sum_{(i_2,j_2)\in Y^{(2)}_{a_2,b_2}}r_{1,2}^{-i_2}r_{2,2}^{-j_2}w\left(Y^{(3)}_{a_3,b_3;i_2,j_2}\right),~~\cV'_1=s_1^{-1}\sum_{(i_3,j_3)\in Y^{(3)}_{a_3,b_3}}r_{1,3}^{-i_3}r_{2,3}^{-j_3}w\left(Y^{(1)}_{a_1,b_1;i_3,j_3}\right),
\fe
where $Y^{(2)}_{0,0}$ represents the fixed point of the first node of $\cM^1(M_2[\star])$.

Lastly, for $\cM^{1}(M_3[\star])$, characters of $\cV''_1$, $\cV''_2$, $\cV''_3$ are
\ie
\cV''_3&=w\left(Y^{(3)}_{a_3,b_3}\right)\text{ if }(a_3,b_3)=(0,0)\text{ else }s_{3}^{-1}\sum_{(i_2,j_2)\in Y^{(2)}_{a_2,b_2}}r_{1,2}^{-i_2}r_{2,2}^{-j_2}w\left(Y^{(3)}_{a_3,b_3;i_2,j_2}\right),\\ 
\cV''_1&=s_{1}^{-1}\sum_{(i_3,j_3)\in Y^{(3)}_{a_3,b_3}}r_{1,3}^{-i_3}r_{2,3}^{-j_3}w\left(Y^{(1)}_{a_1,b_1;i_3,j_3}\right),~~\cV''_2=s_2^{-1}\sum_{(i_1,j_1)\in Y^{(1)}_{a_1,b_1}}r_{1,1}^{-i_1}r_{2,1}^{-j_1}w\left(Y^{(2)}_{a_2,b_2;i_1,j_1}\right),\fe
where $Y^{(3)}_{0,0}$ represents the fixed point of the first node of $\cM^1(M_3[\star])$.

For example, if $\Phi=\mathcal O$, then the $\mathbf{T}_{\mathrm{edge}}$-equivariant K-theory index consists of the following:
\ie\label{Index for Phi=0}
\chi(\cM^{1}(M_1[\star]),\mathcal O)&=\sum_{Y^{(1)}_{0,0},Y^{(2)}_{1,0},Y^{(3)}_{2,0},\ldots}S^{\bullet}(T^{\mathrm{vir}}(\cV_1,\cV_2,\cV_3))^*\\
\chi(\cM^{1}(M_2[\star])),\mathcal O)&=\sum_{Y^{(2)}_{0,0},Y^{(3)}_{1,0},Y^{(1)}_{2,0},\ldots}S^{\bullet}(T^{\mathrm{vir}}(\cV'_1,\cV'_2,\cV'_3))^*\\
\chi(\cM^{1}(M_3[\star])),\mathcal O)&=\sum_{Y^{(3)}_{0,0},Y^{(1)}_{1,0},Y^{(2)}_{2,0},\ldots}S^{\bullet}(T^{\mathrm{vir}}(\cV''_1,\cV''_2,\cV''_3))^*
\fe
where $T^{\mathrm{vir}}$ is given by \eqref{virtual tangent character}, and $S^{\bullet}(T^{\mathrm{vir}})$ is the plethystic exponential of the virtual character $T^{\mathrm{vir}}$, and the $*$ operator maps $(r_{1,i},r_{2,i},s_{i})$ to $(r_{1,i}^{-1},r_{2,i}^{-1},s_{i}^{-1})$.
\\
\newline
{\bf{Example}}\\
\newline
Let us compute the example $M_1=M_2=M_3=1$ explicitly. Recall that the moduli space $\cM^{1,1,1}(1,1,1)$ decomposes into $\cM^{1,0,0}(M_1[\star])\times\cM^{0,1,0}(M_2[\star])\times\cM^{0,0,1}(M_3[\star])$. Since $M_1=M_2=M_3=1$, we only have one torus fixed point
\ie
M_1[0]=1,\quad M_2[1]=0,\quad M_3[2]=0,\quad M_1[3j-3]=M_2[3j-2]=M_3[3j-1]=0,~\text{for }j\geq2\\
M_2[0]=1,\quad M_3[1]=0,\quad M_1[2]=0,\quad M_2[3j-3]=M_3[3j-2]=M_1[3j-1]=0,~\text{for }j\geq2,\\
M_3[0]=1,\quad M_1[1]=0,\quad M_2[2]=0,\quad M_3[3j-3]=M_1[3j-2]=M_2[3j-1]=0,~\text{for }j\geq2,
\fe
given by the following set of Young diagrams:
\ie
Y^{(1)}_{0,0}&=\square,~ Y^{(2)}_{1,0}=\emptyset,~
Y^{(3)}_{2,0}=\emptyset\; \text{and}\\
Y^{(2)}_{0,0}&=\square,~ Y^{(3)}_{1,0}=\emptyset,~
Y^{(1)}_{2,0}=\emptyset\;
\text{and}\\
Y^{(3)}_{0,0}&=\square,~ Y^{(1)}_{1,0}=\emptyset,~
Y^{(2)}_{2,0}=\emptyset.
\fe

The virtual tangent characters at these fixed points are 
\ie 
T_1^{\mathrm{vir}}(\cV_1,\cV_2,\cV_3)&=r_{1,1}+r_{2,1},\\
T_2^{\mathrm{vir}}(\cV'_1,\cV'_2,\cV'_3)&=r_{1,2}+r_{2,2}\\
T_3^{\mathrm{vir}}(\cV''_1,\cV''_2,\cV''_3)&=r_{1,3}+r_{2,3}.
\fe
If we plug these into \eqref{Index for Phi=0}, we get Euler characteristics of $\cM^{1,0,0}(1,1,1)$.
\ie
\chi(\cM^{1,1,1}(1,1,1),\mathcal O)&=P.E.\left[T_1^{\mathrm{vir}}(\cV_1,\cV_2,\cV_3)+T_2^{\mathrm{vir}}(\cV'_1,\cV'_2,\cV'_3)+T_3^{\mathrm{vir}}(\cV''_1,\cV''_2,\cV''_3)\right].
\fe
This example is quite trivial. The first non-trivial example can be worked out by starting from another fixed point:
\ie
M_1[0]=M_2[1]=M_3[2]=1,~\ldots\\
M_2[0]=M_3[0]=0,~\ldots
\fe
We would like to see the action of three copies of the 5d CS algebra on the partition function. To do so, we first observe there is a subvariety of the moduli space $\cM^{1,1,1}(M_1,M_2,M_3)$.\\
\newline
{\bf{An open subvariety of moduli space}}\\
\newline
We consider an open subvariety of $\cM^{1,1,1}(M_1,M_2,M_3)$ denoted by $\mathring{\cM}^{1,1,1}(M_1,M_2,M_3)$, which is defined by the open condition:
$$\bC\langle X_1,Y_1\rangle \mathrm{Im}(I_1)=\bC^{M_1},\quad \bC\langle X_2,Y_2\rangle \mathrm{Im}(I_2)=\bC^{M_2},\quad \bC\langle X_3,Y_3\rangle \mathrm{Im}(I_3)=\bC^{M_3}.$$
Under this condition, three gauge nodes satisfy their own stability condition when considered as individual ADHM quivers. Note that the conditions imply $J_i=0$; however, for rank 1 framing nodes, we can always restore the vanishing $J_i$ so that we recover the usual ADHM quivers. Hence, we have a natural projection given by
\ie\label{projto3}
p: \mathring{\cM}^{1,1,1}(M_1,M_2,M_3)\to \cM^1(M_1)\times \cM^1(M_2)\times\cM^1(M_3).
\fe

The projection map is a vector bundle with the fiber
\ie
\text{Hom}(\cV_1,\cV_2)\times\text{Hom}(\cV_2,\cV_3)\times\text{Hom}(\cV_3,\cV_1),
\fe
which are maps between $\mathcal V_{i}$ and $\mathcal V_{j}$, where $\mathcal V_{1}$ and $\mathcal V_{2}$ are universal bundles on $\cM^1(M_1)$, $\cM^1(M_2)$, $\cM^1(M_3)$ respectively. Now we can observe that the action of three copies of the 5d CS algebra on the open subvariety takes place by taking a particular limit of some of the equivariant parameters.
\begin{proposition}\label{prop: the limit}
In the limit $\underset{r_{1,1}r_{2,1}\to 1}{\lim}\:\underset{r_{1,2}r_{2,2}\to 1}{\lim}\:\underset{r_{1,3}r_{2,3}\to 1}{\lim}$, we have 
\ie\label{eqn: the limit}
\chi(\cM^{1,1,1}(M_1,M_2,M_3),\mathcal O)_{r_{1,i}r_{2,i}\to 1}=\chi(\mathring{\cM}^{1,1,1}(M_1,M_2,M_3),\mathcal O)_{r_{1,i}r_{2,i}\to 1},
\fe
in other words, the fixed points outside $\mathring{\cM}^{1,1,1}(M_1,M_2,M_3)$ do not contribute to this limit. 
\end{proposition}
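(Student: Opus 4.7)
The plan is to apply equivariant localization to both sides of \eqref{eqn: the limit}. Both Euler characteristics then become sums of rational contributions indexed by $\mathbf{T}_{\mathrm{edge}}$-fixed points, and the fixed locus of $\mathring{\cM}^{1,1,1}(M_1,M_2,M_3)$ embeds into that of $\cM^{1,1,1}(M_1,M_2,M_3)$ as the ``trivial-tail'' subset, where $M_i[n]=0$ for all $n>0$ and each $\cV_i$ is simply $w\left(Y^{(i)}_{0,0}\right)$. Thus it suffices to show that every fixed point carrying at least one non-trivial tail diagram $Y^{(i)}_{a,b}$ with $b>0$ contributes zero after the iterated limit $r_{1,i}r_{2,i}\to 1$ is taken for $i=1,2,3$.

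To do this, I would write the contribution at each fixed point as the inverse of the K-theoretic equivariant Euler class of the virtual tangent character \eqref{virtual tangent character}. At a tail fixed point the characters of the universal bundles $\cV_i$ with non-trivial tails carry factors of $s_i^{-1}$ inherited from the $S$-arrows, so the off-diagonal sector $s_i\cV_i\cV_{i+1}^*$ of $T^{\mathrm{vir}}$ produces weights whose $s$-dependence telescopes away, leaving pure monomials in $r_{1,j},r_{2,j}$. The diagonal ADHM-type piece $\cV_i+\cV_i^* r_{1,i}r_{2,i}-\cV_i\cV_i^*(1-r_{1,i})(1-r_{2,i})$ has a well-defined limit as $r_{1,i}r_{2,i}\to 1$ by the standard ADHM--Nekrasov analysis, so any additional vanishing must originate in the off-diagonal sector.

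The key combinatorial step is to exhibit, at each tail fixed point, at least one weight coming from $s_i\cV_i\cV_{i+1}^*$ which degenerates to $1$ in the limit, producing a vanishing numerator factor $(1-r_{1,j}r_{2,j})$ in the localization contribution. My strategy is to induct on the total tail size, using the iterated fibration structure of the earlier Lemma: the minimal tail case, e.g., $M_2[1]=1$ with all other tail ranks zero, is small enough that the vanishing factor can be read off by direct calculation, and the inductive step follows by tracking how additional boxes enlarge the off-diagonal sector. Once the tail contributions vanish, what remains matches the right-hand side, because the pure-ADHM fixed points are exactly those of $\mathring{\cM}^{1,1,1}$ and the projection \eqref{projto3} realizes $\mathring{\cM}^{1,1,1}$ as a vector bundle with known $\mathrm{Hom}$-fibers.

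The main obstacle is ruling out accidental cancellation between the vanishing numerator factors from the $S$-sector and potential poles coming from the diagonal ADHM-type sector, especially when multiple tails are simultaneously active. This should be controllable by comparing the $r$-weights appearing in the two sectors: the off-diagonal weights mix $r$-parameters associated with two distinct nodes, whereas the diagonal ADHM sector produces only single-node $r$-monomials, so the vanishing $(1-r_{1,j}r_{2,j})$ factors from the tail cannot be absorbed by the diagonal sector. One must also verify that the three limits $r_{1,i}r_{2,i}\to 1$ can be taken as an iterated or joint limit without affecting the conclusion, which should follow from the vanishing being uniform across all non-trivial tail fixed points.
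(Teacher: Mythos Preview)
Your overall strategy---localize, identify the fixed points of $\mathring{\cM}^{1,1,1}$ as the ``trivial-tail'' subset, and show that every other fixed point contributes zero in the iterated limit---matches the paper. The gap is in \emph{where} you locate the vanishing. You claim the off-diagonal sector $s_i\cV_i\cV_{i+1}^*$ furnishes a weight degenerating to $1$ and hence a ``vanishing numerator factor''. But that sector enters $T^{\mathrm{vir}}$ with \emph{positive} sign, so under $S^{\bullet}(T^{\mathrm{vir}})^*$ it contributes to the \emph{denominator}: a weight there going to $1$ produces a pole, not a zero. Your proposed safeguard against cancellation---that diagonal ADHM weights involve a single node's $r$-parameters while off-diagonal weights mix two nodes---is also false at tail fixed points: once $\cV_2$ carries factors $r_{1,1}^{-i_1}r_{2,1}^{-j_1}$ from the framing box, the diagonal piece $\cV_2\cV_2^*(1-r_{1,2})(1-r_{2,2})$ already mixes $r_{1,1},r_{2,1},r_{1,2},r_{2,2}$, so the two sectors cannot be separated by their $r$-supports.

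The paper's argument does not look for a single zero-producing weight. Instead it computes the \emph{net multiplicity of the trivial weight} in $T^{\mathrm{vir}}$ after the limit $r_{1,i}r_{2,i}\to 1$ and shows it is strictly negative whenever a tail diagram is non-empty. The trick is to reorganize the surviving ($s$-free) part of \eqref{virtual tangent character} into blocks of the form $T(V)=V+V^*u_1u_2-VV^*(1-u_1)(1-u_2)$; by the arm/leg formula \eqref{simple tangent character} each such block has zero constant term. What remains after this regrouping is an explicit residual $-\sum_{(i,j)} w\bigl(Y^{(2)}_{(i,j)}\bigr)^* r_{1,2}r_{2,2}-\sum_{(i,j)} w\bigl(Y^{(3)}_{(i,j)}\bigr)^* r_{1,3}r_{2,3}$, whose constant term in the limit is minus the number of non-empty tail diagrams. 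Thus the vanishing is a \emph{bookkeeping} statement about signed constant multiplicities, coming from the interaction of diagonal and off-diagonal pieces, not from a single off-diagonal weight. If you redo your argument along these lines---counting the constant term rather than hunting for one weight, and using the $T(V)$ repackaging---the induction and the worry about accidental cancellation both become unnecessary.
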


\begin{remark}
The right-hand side of \eqref{eqn: the limit} does not depend on the choice of the order of taking the limit, but the left-hand side of \eqref{eqn: the limit} is indeed sensitive to the order of taking the limit, namely the order is important to ensure that the limit $\chi(\cM^{1,1,1}(M_1,M_2,M_3),\mathcal O)_{r_{1,i}r_{2,i}\to 1}$ exists. The choice of the order in the proposition is not the unique one that works, in fact it suffices to take one of following order $$\lim_{r_{1,1}r_{2,1}\to 1}\lim_{r_{1,2}r_{2,2}\to 1}\lim_{r_{1,3}r_{2,3}\to 1},\; \lim_{r_{1,3}r_{2,3}\to 1}\lim_{r_{1,1}r_{2,1}\to 1}\lim_{r_{1,2}r_{2,2}\to 1},\;\lim_{r_{1,2}r_{2,2}\to 1}\lim_{r_{1,3}r_{2,3}\to 1}\lim_{r_{1,1}r_{2,1}\to 1},$$ the limit exists and we get the same result which is the right hand side of \eqref{eqn: the limit}.
\end{remark}

\begin{proof}
Suppose that we have a fixed point labelled by a set of Young diagrams \ie
&Y^{(1)}_{0,0},~Y^{(2)}_{1,0;i_1,j_1},~Y^{(3)}_{2,0;i_2,j_2},~Y^{(1)}_{0,1;i_3,j_3},~\ldots,Y^{(1)}_{a_1,b_1;i_3,j_3},\ldots,Y^{(2)}_{a_2,b_2;i_1,j_1},\ldots,Y^{(3)}_{a_3,b_3;i_2,j_2},\ldots,\\
&Y^{(2)}_{0,0},~Y^{(3)}_{1,0;i_2,j_2},~Y^{(1)}_{2,0;i_3,j_3},~Y^{(2)}_{0,1;i_1,j_1},~\ldots,Y^{(2)}_{a_2,b_2;i_3,j_3},\ldots,Y^{(3)}_{a_3,b_3;i_2,j_2},\ldots,Y^{(1)}_{a_1,b_1;i_3,j_3},\\
&Y^{(3)}_{0,0},~Y^{(1)}_{1,0;i_3,j_3},~Y^{(2)}_{2,0;i_1,j_1},~Y^{(3)}_{0,1;i_2,j_2},~\ldots,Y^{(3)}_{a_3,b_3;i_2,j_2},\ldots,Y^{(1)}_{a_1,b_1;i_3,j_3},\ldots,Y^{(2)}_{a_2,b_2;i_1,j_1},\\
\fe
where $(i_k,j_k)$ takes value in the coordinate of box in the preceding Young diagram $Y^{(k)}$. We need to show that $S^{\bullet}(T^{\mathrm{vir}})$ vanishes in the limit $r_{1,1}r_{2,1},r_{1,2}r_{2,2},r_{1,3}r_{2,3}\to 1$ when there is $(i_1,j_1)\in Y^{(1)}_{0,0}$, $(i_2,j_2)\in Y^{(2)}_{0,0}$, $(i_3,j_3)\in Y^{(3)}_{0,0}$ respectively such that $Y^{(2)}_{1,0;i_1,j_1}$, $Y^{(3)}_{1,0;i_2,j_2}$, $Y^{(1)}_{1,0;i_3,j_3}$ are not empty. 

It is enough to show that the constant terms that appear in the virtual character $T^{\mathrm{vir}}_{r_{1,1}r_{2,1},r_{1,2}r_{2,2},r_{1,3}r_{2,3}\to 0}$ sourced from the Young diagrams $Y^{(2)}_{1,0;i_1,j_1}$, $Y^{(3)}_{1,0;i_2,j_2}$, $Y^{(1)}_{1,0;i_3,j_3}$ are negative. To understand why that is so, let us consider the following potential source for constant terms in $T^{\mathrm{vir}}$: $$\sum_{k_i}a_{k_i}(r_{1,i}r_{2,i})^{k_i}-\sum_{l_i}b_{l_i}(r_{1,i}r_{2,i})^{l_i},$$ where $a_{k_i},b_{l_i}$ are positive integers. After taking the plethystic exponential, this term becomes
$$\frac{\prod_{l_i}(1-r_{1,i}^{l_i}r_{2,i}^{l_i})^{b_{l_i}}}{\prod_{k_i}(1-r_{1,i}^{k_i}r_{2,i}^{k_i})^{a_{k_i}}}.$$
It becomes zero in the limit $r_{1,i}r_{2,i}\to 1$ if $\sum_{l_i} b_{l_i}>\sum_{k_i}a_{k_i}$. In other words, it suffices to prove that the constant terms in $T^{\mathrm{vir}}$ are negative and that they are derived from the Young diagrams, which do not exist in the fixed-point set of $\mathring{\cM}^{1,1,1}(M_1,M_2,M_3)$.

Let us find the constant term in $T^{\mathrm{vir}}_{r_{1,i}r_{2,i}\to 1}$. Let us introduce some notation. Recall
\ie 
\cV_2=\sum_{(i_1,j_1)\in Y^{(1)}_{a_1,b_1}}r_{1,1}^{-i_1}r_{2,1}^{-j_1}w\left(Y^{(2)}_{a_2,b_2;i_1,j_1}\right),\quad\cV_3=\sum_{(i_2,j_2)\in Y^{(2)}_{a_2,b_2}}r_{1,2}^{-i_2}r_{2,2}^{-j_2}w\left(Y^{(3)}_{a_3,b_3;i_2,j_2}\right).
\fe
For every $V_i\in K_{(\bC^\times)^2}(\mathrm{pt})=\bC[u_{1,i},u_{1,i}^{-1},u_{2,i},u_{2,i}^{-1}]$\footnote{Here, the notation $K_G(X)$ is an $G-$equivariant K-theory class of space $X$. When $X$ is a point, $K_G$(pt) is a representation space of $G$.}, define
\ie 
T(V_i)=V_i+V_i^*u_{1,i}u_{2,i}-V_iV_i^*(1-u_{1,i})(1-u_{2,i})
\fe
For example, let $(\bC^\times)^2$ be $\mathbf{T}'_1$ then $u_1=r_{1,1},u_2=r_{2,1}$, and $T(\cV_1)=\cV_1+\cV_1^*r_{1,1}r_{2,1}-\cV_1\cV_1^*(1-r_{1,i})(1-r_{2,i})$. Using this notation, we expand
\ie
T^{vir}=\sum_{i=1}^3\cV_i+\cV_i^*r_{1,i}r_{2,i}-\cV_i\cV_i^*(1-r_{1,i})(1-r_{2,i})+s_{i}\cV_i\cV_{i+1}^*
\fe
as follows
\ie\label{expand virtual tangent character}
T^{\mathrm{vir}}(\cV_1,\cV_2,\cV_3)=&T(\cV^{(0)}_1)+s_1^{-1}T(\cV_1)+s_2^{-1}T(\cV_2)+s_3^{-1}T(\cV_3)+\sum_{i=1}^3r_{1,i}r_{2,i}s_i\cV_{i+1}\\
&-\sum_{i=1}^3\cV_i\cV_i^{*}(1-r_{1,i})(1-r_{2,i})+\cV_{i+1}\cV_i^*.
\fe
Recall that if $V$ is the weight space of a Young diagram $Y$, then 
\ie\label{simple tangent character}
T(V)=\sum_{\square\in Y}u_1^{-l(\square)}u_2^{a(\square)+1}+\sum_{\square\in Y}u_1^{l(\square)+1}u_2^{-a(\square)},
\fe
where $a(\square)$ and $l(\square)$ are arm-length and leg-length respectively:
\ie 
a(\square)=\#\{j'>j|(i,j')\in Y\},\; l(\square)=\#\{i'>i|(i',j)\in Y\}.
\fe 
In particular, we see that there are no constant terms in $T(\cV_i)_{r_{1,2}r_{2,2},r_{1,3}r_{2,3}\to 1}$ for all $i=1,2,3$. 

Next, we simply drop all the other terms in \eqref{expand virtual tangent character} that involve $s_i$ or $s_i^{-1}$ because they cannot be constant in the limit $r_{1,i}r_{2,i}\to 0$. Therefore, we only need to calculate the constant term in 
\ie\label{reduced virtual tangent character}
-\sum_{i=1}^3\cV_i\cV_i^{*}(1-r_{1,i})(1-r_{2,i})+\cV_{i+1}\cV_i^*
\fe
in the limit $r_{1,i}r_{2,i}\to 1$. 

Let us take $\mathbf{T}'_1$-invariant first, this amounts to take those terms in \eqref{reduced virtual tangent character} which do not involve $r_{1,1}$ or $r_{2,1}$, and the result is
\ie
&\sum_{(i,j)\in Y^{(1)}}w\left(Y^{(2)}_{(i,j)}\right)
+\sum_{(i_n,j_n)\in Y^{(n)}}w\left(Y^{(3)}_{(i_2,j_2)}\right)w\left(Y^{(2)}_{(i_1,j_1)}\right)^*
+\sum_{(i,j)\in Y^{(1)}}w\left(Y^{(3)}_{(i,j)}\right)^*\\
-&\sum_{(i,j)\in Y^{(1)}}w\left(Y^{(2)}_{(i,j)}\right)w\left(Y^{(2)}_{(i,j)}\right)^*(1-r_{1,2})(1-r_{2,2})
-\sum_{(i,j)\in Y^{(2)}}w\left(Y^{(3)}_{(i,j)}\right)w\left(Y^{(3)}_{(i,j)}\right)^*(1-r_{1,3})(1-r_{2,3}).
\fe
Notice that this can be rewritten as 
\ie 
&\sum_{(i,j)\in Y^{(1)}}T\left(w\left(Y^{(2)}_{(i,j)}\right)\right)
+\sum_{(i,j)\in Y^{(2)}}T\left(w\left(Y^{(3)}_{(i,j)}\right)^*\right)
+\sum_{(i_n,j_n)\in Y^{(n)}}w\left(Y^{(3)}_{(i_2,j_2)}\right)w\left(Y^{(2)}_{(i_1,j_1)}\right)^*\\
-&\sum_{(i,j)\in Y^{(1)}}w\left(Y^{(2)}_{(i,j)}\right)^*r_{1,2}r_{2,2}
-\sum_{(i,j)\in Y^{(2)}}w\left(Y^{(3)}_{(i,j)}\right)^*r_{1,3}r_{2,3},
\fe
and $T\left(w\left(Y^{(2)}_{(i,j)}\right)\right)$, $T\left(w\left(Y^{(3)}_{(i,j)}\right)\right)$ have zero constant term in the limit $r_{1,2}r_{2,2}\to 1$, $r_{1,3}r_{2,3}\to 1$, as can be seen from \eqref{simple tangent character}. Thus, the constant term of $T^{\mathrm{vir}}_{s_1s_2\to 1}$ equals to the constant term of 
\ie 
\sum_{(i,j)\in Y^{(1)}}w\left(Y^{(2)}_{(i,j)}\right)^*_{r_{1,2}r_{2,2}\to 1}+\sum_{(i,j)\in Y^{(2)}}w\left(Y^{(3)}_{(i,j)}\right)^*_{r_{1,3}r_{2,3}\to 1}+\sum_{(i_n,j_n)\in Y^{(n)}}w\left(Y^{(3)}_{(i_2,j_2)}\right)w\left(Y^{(2)}_{(i_1,j_1)}\right)^*,
\fe
which is minus the combination of the number of non-empty Young diagrams $Y^{(2)}_{(i,j)}$, $Y^{(3)}_{(i,j)}$. In particular, it is negative if one of $Y^{(2)}_{(i,j)}$, $Y^{(3)}_{(i,j)}$ is non-empty. This concludes the proof of proposition.
\end{proof}
\noindent{\bf{The index of the open subset $\mathring{\cM}^{1,1,1}(M_1,M_2,M_3)$}}\\
\newline
We have seen in \eqref{projto3} that $\mathring{\cM}^{1,1,1}(M_1,M_2,M_3)$ is the vector bundle $\mathrm{Hom}(\cV_1,\cV_2)\otimes\mathrm{Hom}(\cV_2,\cV_3)\otimes\mathrm{Hom}(\cV_3,\cV_1)$, so its ring of functions is 
\ie 
\bigoplus_{k_2,k_3,k_1\ge 0}s_2^{k_2}s_3^{k_3}s_1^{k_1}S^{k_2}(\cV_2\otimes \cV_1^*)S^{k_3}(\cV_3\otimes \cV_2^*)S^{k_1}(\cV_1\otimes \cV_3^*)
\fe 
as a $\mathbf{T}_{\mathrm{edge}}$-equivariant sheaf on $\cM(M_1,1)\times \cM(M_2,1)\times \cM(M_3,1)$. We will use the following lemma. 
\begin{lemma}
\ie\label{decomposition of symmetric product}
S^{k_i}(\cV_{i}\otimes \cV_{i-1}^*)=\bigoplus_{|\underline{\lambda}|=k_i} S^{\underline{\lambda}}(\cV_i)\otimes S^{\underline{\lambda}}(\cV_{i-1}^*),
\fe 
where $\underline\lambda$ is a partition of $k_i$ (given by a Young diagram) and $S^{\underline{\lambda}}(\cV_i)$ is the irreducible representation of $\mathrm{GL}(\cV_i)$ defined $\mathrm{Hom}_{S_k}(R_{\underline\lambda},\cV_i^{\otimes k})$. Here $R_{\underline\lambda}$ is the irreducible representation of the permutation group $S_k$ with the Young diagram $\underline\lambda$. For example,$S^{(k_2)}(\cV_2)=S^{k_2}(\cV_2),\; S^{(1,1,\cdots, 1)}(\cV_2)=\wedge^{k_2}(\cV_2)$. 
\end{lemma}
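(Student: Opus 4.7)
The plan is to recognize this as the classical Cauchy decomposition of a symmetric power of a tensor product, and to deduce it from Schur--Weyl duality together with the self-duality of irreducible $S_k$-representations. Setting $V=\mathcal V_i$, $W=\mathcal V_{i-1}^*$, and $k=k_i$, the assertion reads $S^k(V\otimes W)\cong\bigoplus_{|\underline\lambda|=k}S^{\underline\lambda}(V)\otimes S^{\underline\lambda}(W)$ as polynomial $\mathrm{GL}(V)\times\mathrm{GL}(W)$-representations, which in particular gives the identity of $\mathbf T_{\mathrm{edge}}$-equivariant sheaves needed in the subsequent index computation.

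The first step is to unwind the definition $S^k(V\otimes W)=\bigl((V\otimes W)^{\otimes k}\bigr)^{S_k}$ and invoke the canonical reshuffling isomorphism $(V\otimes W)^{\otimes k}\cong V^{\otimes k}\otimes W^{\otimes k}$; under this rearrangement the $S_k$-action permuting the $k$ tensor factors on the left transforms into the diagonal $S_k$-action on the right. Next, apply Schur--Weyl duality to each factor,
\ie\nonumber
V^{\otimes k}\cong\bigoplus_{|\underline\lambda|=k}R_{\underline\lambda}\otimes S^{\underline\lambda}(V),\qquad W^{\otimes k}\cong\bigoplus_{|\underline\mu|=k}R_{\underline\mu}\otimes S^{\underline\mu}(W),
\fe
tensor them together, and extract the diagonal $S_k$-invariants. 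The whole calculation reduces to evaluating $(R_{\underline\lambda}\otimes R_{\underline\mu})^{S_k}$, which by the self-duality $R_{\underline\lambda}^*\cong R_{\underline\lambda}$ of irreducible $S_k$-modules equals $\mathrm{Hom}_{S_k}(R_{\underline\lambda},R_{\underline\mu})$, and by Schur's lemma this is one-dimensional when $\underline\lambda=\underline\mu$ and zero otherwise. Assembling the pieces yields the stated decomposition.

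I do not anticipate a substantive obstacle here, since this is a standard consequence of Schur--Weyl duality. The only delicate points are checking that the reshuffling genuinely intertwines the two $S_k$-actions and that every isomorphism above is $\mathrm{GL}(V)\times\mathrm{GL}(W)$-equivariant, both of which are routine. As an independent sanity check, and in fact the most economical route if one only needs the character identity for the index formula that follows, one may simply compare $\mathbf T_{\mathrm{edge}}$-characters of the two sides: the left-hand side has generating character $\prod_{a,b}(1-x_ay_b)^{-1}$ graded by $k$, and the classical Cauchy identity $\prod_{a,b}(1-x_ay_b)^{-1}=\sum_{\underline\lambda}s_{\underline\lambda}(x)s_{\underline\lambda}(y)$ immediately matches the degree-$k_i$ character of the right-hand side.
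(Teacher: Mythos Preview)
Your proof is correct and follows essentially the same approach as the paper: reshuffle $(V\otimes W)^{\otimes k}\cong V^{\otimes k}\otimes W^{\otimes k}$, apply Schur--Weyl duality to each factor, and use $(R_{\underline\lambda}\otimes R_{\underline\mu})^{S_k}\cong\mathrm{Hom}_{S_k}(R_{\underline\lambda},R_{\underline\mu})$ together with Schur's lemma. The Cauchy-identity character check you append is a nice extra sanity check not present in the paper, but the core argument is the same.
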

\begin{proof}
Note that $S^k(\cV\otimes\cW)$ denotes the $S_k$ invariant of $(\cV\otimes\cW)^{\otimes k}$, which is a subspace of $(\cV\otimes\cW)^{\otimes k}$. Then consider
\ie
(\cV\otimes\cW)^{\otimes k}&=\cV^{\otimes k}\otimes\cW^{\otimes k}\\
&=\left(\bigoplus_{|\ld|=k}S^{\ld}(\cV)\otimes R_\ld\right)\otimes\left(\bigoplus_{|\ld'|=k}S^{\ld'}(\cW)\otimes R_{\ld'}\right),
\fe
where we used Schur-Weyl duality in the second line. $S^\ld(\bC^n)$ is a representation of $GL(n)$ labeled by a Young tableau $\ld$ and $R_\ld$ is a representation of $S_k$ labeled by $\ld$. We want the $S_k$-invariant part of $(\cV\otimes\cW)^{\otimes k}$ and $S_k$ to act only on $R_\ld$, $R_{\ld'}$. Using
\ie
\text{Hom}_{S_k}\left(\bC,R_\ld\otimes R_{\ld'}\right)=
\begin{cases} 0 &\text{if }R_\ld\neq (R_{\ld'})^*\\
\bC &\text{if }R_\ld=(R_{\ld'})^*
\end{cases},
\fe
we have
\ie
S^k(\cV\otimes\cW)=\bigoplus_{|\ld|=k}S^\ld(\cV)\otimes S^\ld(\cW).
\fe
\end{proof}
Using the decomposition \eqref{decomposition of symmetric product}, we can write the $\mathbf{T}_{\mathrm{edge}}$-equivariant K-theory index of $\mathring{\cM}^{1,1,1}(M_1,M_2,M_3)$ as
\ie 
&\chi(\mathring{\cM}^{1,1,1}(M_1,M_2,M_3),\mathcal O)=\sum_{\underline\lambda_1,\underline\lambda_2,\underline\lambda_3}s_1^{|\underline\lambda_1|}s_2^{|\underline\lambda_2|}s_3^{|\underline\lambda_3|}
\chi(\cM(M_1,1),S^{\underline\lambda_1}(\cV_1)\otimes S^{\underline{\lambda}_2}(\cV_1^*))\\
\times&~\chi(\cM(M_2,1),S^{\underline{\lambda}_2}(\cV_2)\otimes S^{\underline{\lambda}_3}(\cV^*_2))
~\chi(\cM(M_3,1),S^{\underline{\lambda}_3}(\cV_3)\otimes S^{\underline{\lambda}_1}(\cV^*_3)),
\fe 
where the sum is for all Young diagrams $\underline\lambda_i$, $i=1,2,3$. Note that
\ie \label{factorization of index}
\chi(\cM(M_i,1),S^{\underline{\lambda}}(\cV_i^*))=r_{1,i}^{-2M_i}r_{2,i}^{-2M_i}\chi(\cM(M_i,1),S^{\underline{\lambda}}(\cV_i))^*.
\fe 
Recall that the equivariant K-theory partition function of $\cM^{1,1,1}$ is defined as
\ie 
Z_{1,1,1}=\sum_{M_1,M_2,M_3}\chi(\cM^{1,1,1}(M_1,M_2,M_3),\mathcal O)x^{M_1}y^{M_2}z^{M_3}.
\fe 
Combining \eqref{factorization of index} with Proposition \ref{prop: the limit}, we then have
\begin{proposition}\label{prop: factorization}
In the limit $\underset{r_{1,1}r_{2,1}\to 1}{\lim}\:\underset{r_{1,2}r_{2,2}\to 1}{\lim}\:\underset{r_{1,3}r_{2,3}\to 1}{\lim}$, the equivariant K-theory partition function of $\cM^{1,1,1}$ factorizes as
\ie 
Z_{1,1,1}=&\sum_{\underline\lambda_i}s_1^{|\underline\lambda_1|}s_2^{|\underline\lambda_2|}s_3^{|\underline\lambda_3|}\mathcal Z^{\underline{\lambda_1},\underline{\lambda_2}}_1(r_{1,1},r_{2,1},x)_{r_{1,1}r_{2,1}\to 1}\mathcal Z^{\underline{\lambda_2},\underline{\lambda_3}}_1(r_{1,2},r_{2,2},y)_{r_{1,2}r_{2,2}\to 1}\\
\times&\mathcal Z^{\underline{\lambda_3},\underline{\lambda_1}}_1(r_{1,3},r_{2,3},z)_{r_{1,3}r_{2,3}\to 1}
\fe 
where $\mathcal Z^{\underline{\lambda}_i,\underline{\lambda}_{i+1}}_1(r_{1,i},r_{2,i},x_i)$ is the equivariant K-theory partition function for ADHM quiver:
\ie\label{halffactor}
\mathcal Z^{\underline{\lambda}_i,\underline{\lambda}_{i+1}}_1(r_{1,i},r_{2,i},x_i)=\sum _{M_i}\chi(\cM(M_i,1),S^{\underline{\lambda}_i}(\cV_i)\otimes S^{\underline{\lambda}_{i+1}}(\cV^*_i))x_i^{M_i},
\fe 
where $x_1=x$, $x_2=y$, $x_3=z$.
\end{proposition}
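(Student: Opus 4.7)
The plan is to combine Proposition \ref{prop: the limit} with the vector bundle description \eqref{projto3} of $\mathring{\cM}^{1,1,1}(M_1,M_2,M_3)$ and the Schur decomposition of Lemma \ref{decomposition of symmetric product}. By Proposition \ref{prop: the limit}, in the prescribed iterated limit the equivariant index may be replaced termwise, giving
\ie
Z_{1,1,1}=\sum_{M_1,M_2,M_3}\chi(\mathring{\cM}^{1,1,1}(M_1,M_2,M_3),\mathcal O)\,x^{M_1}y^{M_2}z^{M_3}.
\fe

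Next I would push forward the structure sheaf along \eqref{projto3}. Since $\mathring{\cM}^{1,1,1}(M_1,M_2,M_3)$ is a $\mathbf{T}_{\mathrm{edge}}$-equivariant vector bundle over $\cM(M_1,1)\times\cM(M_2,1)\times\cM(M_3,1)$, its ring of functions is the symmetric algebra of the dual fiber, already written in the text as
\ie
\bigoplus_{k_1,k_2,k_3\ge 0}s_1^{k_1}s_2^{k_2}s_3^{k_3}\,S^{k_2}(\cV_2\otimes\cV_1^*)\otimes S^{k_3}(\cV_3\otimes\cV_2^*)\otimes S^{k_1}(\cV_1\otimes\cV_3^*).
\fe
Applying \eqref{decomposition of symmetric product} to each factor converts $S^{k_i}(\cV_i\otimes\cV_{i-1}^*)$ into a sum over Young diagrams of $S^{\underline\lambda_i}(\cV_i)\otimes S^{\underline\lambda_i}(\cV_{i-1}^*)$ with $|\underline\lambda_i|=k_i$, so that $s_i$ is identified with the formal variable tracking $|\underline\lambda_i|$. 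After regrouping the Schur factors by which ADHM factor they live on, the universal bundle $\cV_i$ on $\cM(M_i,1)$ is acted on by exactly the pair $S^{\underline\lambda_i}(\cV_i)\otimes S^{\underline\lambda_{i+1}}(\cV_i^*)$ (indices taken mod $3$), so the equivariant parameters of $\mathbf{T}'_i$ enter only through $\chi(\cM(M_i,1),S^{\underline\lambda_i}(\cV_i)\otimes S^{\underline\lambda_{i+1}}(\cV_i^*))$.

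Finally, the equivariant K\"unneth formula on the product $\cM(M_1,1)\times\cM(M_2,1)\times\cM(M_3,1)$ decouples the triple Euler characteristic into three independent factors, and the sum over $(M_1,M_2,M_3)$ splits correspondingly. Identifying each factor with $\mathcal Z^{\underline\lambda_i,\underline\lambda_{i+1}}_1(r_{1,i},r_{2,i},x_i)$ via \eqref{halffactor} yields the claimed formula; the relation \eqref{factorization of index} is invoked only to certify that each individual limit $r_{1,i}r_{2,i}\to 1$ acts factor by factor, without producing cross contributions between the three ADHM pieces, so that the iterated limits commute with the Young-diagram summations. The only genuine obstacle is the cyclic bookkeeping of the indices $\underline\lambda_i\leftrightarrow\underline\lambda_{i+1}$, ensuring that each Young diagram $\underline\lambda_i$ couples exactly two adjacent ADHM factors in the correct pattern; all remaining manipulations are standard Schur--Weyl and K\"unneth computations given the preceding propositions.
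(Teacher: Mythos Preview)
Your proposal is correct and follows essentially the same route as the paper: reduce to the open subset $\mathring{\cM}^{1,1,1}$ via Proposition~\ref{prop: the limit}, push forward along the vector-bundle projection \eqref{projto3}, apply the Schur decomposition of Lemma~\ref{decomposition of symmetric product} to each $S^{k_i}(\cV_i\otimes\cV_{i-1}^*)$, and then separate the three ADHM factors by K\"unneth. The paper presents this argument in the paragraphs immediately preceding the proposition rather than in a formal proof environment, and your write-up matches that discussion step for step, including the role of \eqref{factorization of index} in controlling the individual limits.
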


\begin{remark}
Consider the ADHM quiver with the framing rank one and the gauge rank $M_1$, and the additional framing nodes with the rank $M_2$, $M_3$, then it is easy to see that the corresponding quiver variety is the universal bundle $\cV$ in the ADHM quiver without extra framing. Note that there is an edge torus action on the quiver variety and the equivariant K-theory index for this variety is 
\ie 
\chi(\cM(M_1,1),S^{\underline{\lambda}_1}(\cV_1)\otimes S^{\underline{\lambda}_2}(\cV^*_1))s_1^{|\ld_1|}
\fe
This is exactly the coefficient in the summation \eqref{halffactor}.
\end{remark}
\subsection{Connection to the 5d Chern-Simons algebra of operators}\label{sec:connectionto5dcs}

Notice that in the $s_i\to 0, r_{1_i},r_{2,i}\to 1$ limit, the equivariant K-theory partition function of $\cM^{1,1,1}$ equals 
\ie 
\sum_{M_1,M_2,M_3}\chi(\cM(M_1,1),\mathcal O)\chi(\cM(M_2,1),\mathcal O)\chi(\cM(M_3,1),\mathcal O)x^{M_1}y^{M_2}z^{M_3},
\fe 
i.e. it is a product of equivariant K-theory partition functions of ADHM quivers:
\ie 
\lim_{s_i\to 0}Z_{1,1,1}=Z_{\mathrm{ADHM}}(r_{1,1},r_{2,1},x)_{r_{1,1}r_{2,1}\to 1}Z_{\mathrm{ADHM}}(r_{1,2},r_{2,2},y)_{r_{1,2}r_{2,2}\to 1}Z_{\mathrm{ADHM}}(r_{1,3},r_{2,3},z)_{r_{1,3}r_{2,3}\to 1}.
\fe 
Since the equivariant K-theory of ADHM quiver varieties admits a structure of Verma modules of the affine quantum group $U_q(\hat{\mathfrak{gl}}_1)$, we see that the $s_i=0$ part of $Z_{1,1,1}$ is the character of a Verma module for $U_q(\hat{\mathfrak{gl}}_1)\otimes U_q(\hat{\mathfrak{gl}}_1)\otimes U_q(\hat{\mathfrak{gl}}_1)$. After passing from K-theory to cohomology, part of $Z_{1,1,1}$ is a character of a Verma module for the affine Yangian of $\mathfrak{gl}_1\oplus \mathfrak{gl}_1\oplus \mathfrak{gl}_1$, which is exactly the algebra of gauge-invariant local observables of the 5d holomorphic Chern-Simons theory with gauge group $U(1)\times U(1)\times U(1)$. 
\begin{conjecture}
There is an action of $Y(\hat{\mathfrak{gl}}_1)\otimes Y(\hat{\mathfrak{gl}}_1)\otimes Y(\hat{\mathfrak{gl}}_1)$ on
\ie 
\bigoplus_{M_1,M_2,M_3}H^*_{\mathbf{T}_{\mathrm{edge}}}(\cM^{1,1,1}(M_1,M_2,M_3)).
\fe
\end{conjecture}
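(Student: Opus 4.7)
The plan is to construct the action geometrically via Hecke-type correspondences, following the classical Schiffmann--Vasserot / Maulik--Okounkov approach for ADHM moduli spaces, and then verify that the three copies mutually commute. For each color $i\in\{1,2,3\}$, I would introduce a Hecke correspondence
\ie
\mathcal H_i\subset\cM^{1,1,1}(M_1,M_2,M_3)\times\cM^{1,1,1}(M_1',M_2',M_3'),
\fe
where $M_j'=M_j$ for $j\neq i$ and $M_i'=M_i+1$, parametrizing pairs $(p,p')$ of quiver data together with a one-dimensional subspace in $\cV_i(p')$ that is preserved by $X_i,Y_i$, contains the image of $I_i$ modulo the complement, and is compatible with the arrows $S_{i-1,i}$ landing at node $i$. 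The generators of the $i$-th copy of $Y(\hat{\mathfrak{gl}}_1)$ are then defined as convolution operators with $\mathcal H_i$ twisted by tautological classes of $\cV_i$, and the Cartan generators $\psi_i(z)$ are defined to act diagonally in the fixed-point basis with eigenvalues given by the generating function of the equivariant Chern roots of $\cV_i$.

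First, I would establish a local model over the open subvariety $\mathring{\cM}^{1,1,1}$. Using the vector bundle projection \eqref{projto3}, the restriction $\mathcal H_i\big|_{\mathring{\cM}^{1,1,1}}$ should be identified with the pullback of the standard Nakajima Hecke correspondence on the $i$-th ADHM factor $\cM(M_i,1)$, times the identity on the other two factors, all pulled back via \eqref{projto3} and then tensored with the fiber bundle $\mathrm{Hom}(\cV_i,\cV_{i+1})\oplus\mathrm{Hom}(\cV_{i-1},\cV_i)\oplus\cdots$. Since each ADHM factor carries an action of $Y(\hat{\mathfrak{gl}}_1)$, and the three factors commute trivially in the product, this immediately yields an action of $Y(\hat{\mathfrak{gl}}_1)^{\otimes 3}$ on $H^*_{\mathbf T_{\mathrm{edge}}}(\mathring{\cM}^{1,1,1})$. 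The explicit matrix coefficients of the generators in the fixed-point basis can be read off from Proposition 3 and the virtual tangent character \eqref{virtual tangent character}, with the factors of $s_i$ tracking the color.

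Second, I would extend the operators from the open piece to the full equivariant cohomology by defining them directly as convolutions with the global correspondences $\mathcal H_i$. The normal bundle computation needed to write the operators uses the perfect obstruction theory from Proposition \ref{prop: regular embedding} and the description of $T^{\mathrm{vir}}$ in \eqref{tangent of our moduli}. The intra-color relations (within a single $Y(\hat{\mathfrak{gl}}_1)$) should follow by a deformation argument: Proposition \ref{prop: the limit} shows that in the limit $r_{1,i}r_{2,i}\to 1$ the full index reduces to the open subvariety, where the ADHM relations are known, and a flatness argument using Proposition \ref{prop: regular embedding} should allow the relations to be propagated off the limit.

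The main obstacle is the verification of the cross-color commutation relations. A priori, operators of colors $i$ and $j$ modify different gauge nodes so they commute set-theoretically, but their convolution compositions can pick up correction terms from the coupling arrows $S_{ij}$ entering the virtual normal bundle. The hard part will be showing these corrections vanish in cohomology, so that genuine commutativity holds and one indeed gets a tensor product rather than a more complicated double. I expect that the mechanism responsible for the vanishing is the same cancellation observed in the proof of Proposition \ref{prop: the limit}: the terms $s_i\cV_i\cV_{i+1}^*$ in \eqref{virtual tangent character} combine with the Euler classes of the correspondence into expressions whose constant part is negative and hence disappears under the appropriate equivariant limit. As a consistency check and cross-validation, the resulting algebra should be matched against the 5d Chern-Simons prediction of \secref{subsec:twistedM}, where the product structure $U_{\E_1}\times U_{\E_2}\times U_{\E_3}$ is geometrically forced by the three transverse topological lines intersecting at a point.
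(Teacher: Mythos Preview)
The paper does not prove this statement at all: it is explicitly labeled a \emph{Conjecture} and is left open. The only evidence offered is the character computation preceding it, showing that in the limit $s_i\to 0$, $r_{1,i}r_{2,i}\to 1$ the partition function matches the character of a Verma module for $Y(\hat{\mathfrak{gl}}_1)^{\otimes 3}$. There is therefore no ``paper's own proof'' to compare against; what you have written is a programme for attacking an open problem.

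As a programme it is reasonable in outline but has real gaps. First, the passage from the open piece to the full space is not under control. Your plan is to check the Yangian relations on $\mathring{\cM}^{1,1,1}$ (where they reduce to the known ADHM case) and then ``propagate off the limit'' via Proposition~\ref{prop: the limit}. But that proposition is a statement about equality of \emph{indices} after a specific specialization $r_{1,i}r_{2,i}\to 1$; it says nothing about the module structure on $H^*_{\mathbf T_{\mathrm{edge}}}$ at generic equivariant parameters, and the restriction map $H^*_{\mathbf T_{\mathrm{edge}}}(\cM^{1,1,1})\to H^*_{\mathbf T_{\mathrm{edge}}}(\mathring{\cM}^{1,1,1})$ is not a priori injective, so relations verified on the open part need not hold globally. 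Second, the Hecke correspondences $\mathcal H_i$ you describe live on a space that is only l.c.i.\ (Proposition~\ref{prop: regular embedding}), not smooth symplectic, so the standard Schiffmann--Vasserot/Maulik--Okounkov machinery does not apply verbatim; one would need a virtual version of the convolution formalism, and you have not specified which one or why it produces the correct relations. Third, your treatment of cross-color commutation is essentially a restatement of the difficulty: the cancellation mechanism you invoke from the proof of Proposition~\ref{prop: the limit} again only operates in the limit $r_{1,i}r_{2,i}\to 1$, whereas commutativity of the three copies must hold identically in all equivariant parameters including the $s_i$, and the coupling terms $s_i\cV_i\cV_{i+1}^*$ in \eqref{virtual tangent character} are precisely what could spoil it away from that limit.
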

\noindent It will be nice to find the map between the fugacities of $\mathbf{T}_{\mathrm{edge}}$ and $\epsilon_1,\epsilon_2,\epsilon_3$ in $Y(\hat{\mathfrak{gl}}_1)$.

\section{Gluing two partition functions}\label{sec:5}
In this section, we discuss the glueing of the building blocks $\cM^{1,1}(M_1,M_2)$\footnote{We thank Michele Del Zotto for suggesting to consider gluing.}. Consider $n+1$ copies of ADHM quivers, with framing rank $1$ and gauge ranks $M[0],\cdots,M[n]$, then we connect these ADHM quivers by joining $M[i]$ with $M[i+1]$ by arrow $S_i$. When $n=1$ this is the quiver configuration for $\cM^{1,1}(M[0],M[1])$, and for $n>1$ this new quiver can be thought of as the gluing of $n$ copies of the above building blocks by identifying the gauge nodes. We call the moduli space of stable representation of the the above quiver $\cM(M[0],\cdots,M[n])$.

For each gauge node $i$, there is a torus $\mathbf{T}'_i=\mathbb C^{\times 2}$ acting on the quiver data as
\ie
(X_i,Y_i,I_i,J_i)\mapsto (a_iX_i,b_iY_i,I_i,a_ib_iJ_i),\quad (a_i,b_i)\in \mathbf{T}'_i.
\fe
There is another set of tori $\tilde{\mathbf{T}}_i=\bC^{\times}$ acting on the quiver data by scaling the $S_i$ arrow:
\ie
S_i\mapsto c_iS_i,\quad c_i\in \tilde{\mathbf{T}}_i.
\fe
It is obvious that the actions of $\mathbf{T}'_i$ and $\mathbf{S}_i$ commute with the gauge group, and thus the torus $\mathbf{T}_{\mathrm{edge}}=\prod_{i=0}^n\mathbf{T}'_i\times \prod_{i=0}^{n-1}\tilde{\mathbf{T}}_i$ acts on the moduli space $\cM(M[0],\cdots,M[n])$. Moreover, it is easy to see that $\cM(M[0],\cdots,M[n])^{\mathbf{T}_{\mathrm{edge}}}$ is a set of disjoint union of reduced points. Therefore it makes sense to discuss the $\mathbf{T}_{\mathrm{edge}}$-equivariant K-theory class of $\chi(\cM(M[0],\cdots,M[n]),\cO)$.

Let us denote the equivariant parameters of $\mathbf{T}'_i$ by $r_{1,i},r_{2,i}$, and the equivariant parameters of $\tilde{\mathbf{T}}_i$ by $s_{i}$. The following is analogous to Proposition \ref{prop: the limit}.
\begin{proposition}
In the limit $\underset{r_{1,0}r_{2,0}\to 1}{\lim}\:\underset{r_{1,1}r_{2,1}\to 1}{\lim}\cdots\underset{r_{1,n}r_{2,n}\to 1}{\lim}$, we have 
\ie 
\chi(\cM(M[0],\cdots,M[n]),\cO)_{r_{1,i}r_{2,i}\to 1}=\chi(\mathring{\cM}(M[0],\cdots,M[n]),\mathcal O)_{r_{1,i}r_{2,i}\to 1},
\fe
where $\mathring{\cM}(M[0],\cdots,M[n])$ is the open subset of $\cM(M[0],\cdots,M[n])$ such that the condition
$$\forall i,\;\bC\langle X_i,Y_i\rangle \mathrm{Im}(I_i)=\bC^{M[i]},$$is satisfied.
\end{proposition}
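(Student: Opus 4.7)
The strategy is to adapt the proof of Proposition \ref{prop: the limit} to the linear chain setup. First, I would classify the $\mathbf{T}_{\mathrm{edge}}$-fixed points of $\cM(M[0],\ldots,M[n])$ by iterating along the chain: at node $0$ the framing $I_0$ seeds a Young diagram $Y^{(0)}_0$ encoding the ADHM structure, and at each subsequent node $i$ equivariance of the arrow $S_{i-1}$ attaches a Young diagram $Y^{(i)}_{i_{i-1},j_{i-1}}$ to every box $(i_{i-1},j_{i-1})$ of every diagram at node $i-1$, while the framing $I_i$ itself may seed a new root diagram $Y^{(i)}_0$. The open condition $\bC\langle X_i,Y_i\rangle\mathrm{Im}(I_i)=\bC^{M[i]}$ fails at node $i$ exactly when the fiber of $\mathcal V_i$ at the fixed point contains weight spaces not captured by $Y^{(i)}_0$ alone, i.e.\ when at least one of the diagrams $Y^{(i)}_{i_{i-1},j_{i-1}}$ reached via $S_{i-1}$ is nonempty.

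Next, I would expand the $\mathbf{T}_{\mathrm{edge}}$-equivariant virtual tangent character in a form analogous to \eqref{expand virtual tangent character}: writing $T(V)=V+V^{*}r_1r_2-VV^{*}(1-r_1)(1-r_2)$ for the ADHM tangent functor at the appropriate node, the virtual tangent of $\cM(M[0],\ldots,M[n])$ is schematically
\ie
T^{\mathrm{vir}}=\sum_{i=0}^{n}T(\mathcal V_i)+\sum_{i=0}^{n-1}s_i\,\mathcal V_{i+1}\mathcal V_i^{*}-\sum_{i=0}^{n-1}\mathcal V_{i+1}\mathcal V_i^{*}(1-r_{1,i+1})(1-r_{2,i+1}).
\fe
By the arm-leg formula \eqref{simple tangent character}, each $T(\mathcal V_i)$ contains no constant term in the Calabi-Yau limit $r_{1,i}r_{2,i}\to 1$. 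The terms involving $s_i$ or $s_i^{-1}$ carry nontrivial weights under $\tilde{\mathbf{T}}_i$ and so cannot contribute constants either, so all constant contributions must come from the cross-terms $\mathcal V_{i+1}\mathcal V_i^{*}(1-r_{1,i+1})(1-r_{2,i+1})$.

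The core of the argument is then to decompose $\mathcal V_{i+1}\mathcal V_i^{*}$ at a fixed point and track the surviving constant, exactly as in the proof of Proposition \ref{prop: the limit}: after grouping contributions into complete $T(\cdot)$ pieces that vanish in the limit, what remains is minus the total number of nonempty Young diagrams $Y^{(i+1)}_{i_i,j_i}$ attached to boxes of diagrams at node $i$ via $S_i$, summed over $i$. This count is strictly negative exactly when the fixed point violates the open condition at some node, so the plethystic exponential $S^{\bullet}(T^{\mathrm{vir}})^{*}$ vanishes in the iterated limit, and only points of $\mathring{\cM}(M[0],\ldots,M[n])$ survive.

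The main obstacle I expect is the combinatorial bookkeeping for the forest of Young diagrams along the chain together with justifying the chosen order of iterated limits: the existence of each intermediate limit must be verified before the next one is taken. I would proceed by induction on $n$, first taking the limit $r_{1,0}r_{2,0}\to 1$ to reduce to a statement about the $(n-1)$-node subchain with the node-$0$ diagram frozen, then iterating. At each step the vanishing reduces to the node-wise estimate sketched above, which is essentially the same computation as in Proposition \ref{prop: the limit} applied to a pair of adjacent nodes, with the cross-term contribution localized to the $S_i$ arrow.
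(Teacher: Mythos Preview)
Your proposal is essentially the same as the paper's: the paper simply says the proof ``is omitted since it is similar to that of Proposition~\ref{prop: the limit}'', and you propose precisely to adapt that argument to the linear chain. Your description of the fixed-point combinatorics (a forest of Young diagrams rooted at each framing, grown via the $S$-arrows) and of the endgame (the residual constant in $T^{\mathrm{vir}}$ after regrouping into $T(\cdot)$ pieces is minus the number of nonempty attached diagrams) is correct and matches what the adapted argument yields.

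One correction to your schematic virtual tangent: the $S_i$ arrow does \emph{not} enter the complex moment map in this quiver (cf.\ \eqref{Dtermsp1}), so there is no term $-\cV_{i+1}\cV_i^{*}(1-r_{1,i+1})(1-r_{2,i+1})$. The correct class is just
\[
T^{\mathrm{vir}}=\sum_{i=0}^{n}T(\cV_i)+\sum_{i=0}^{n-1}s_i\,\cV_{i+1}\cV_i^{*}.
\]
The negative constants do not come from a separate cross-term but from inside $T(\cV_i)$ itself: at a fixed point $\cV_i$ decomposes by $\tilde{\mathbf{T}}$-weight into the root piece $w(Y^{(i)}_0)$ and the pieces reached via $S_{i-1}$; when you expand $T(\cV_i)$ and regroup the $s$-weight-zero part into sums of $T(w(Y^{(i)}_{\bullet}))$ (each of which has zero constant by the arm/leg formula), the leftover is $-\sum_{\bullet}w(Y^{(i)}_{\bullet})^{*}r_{1,i}r_{2,i}$, exactly as in the paper's derivation for Proposition~\ref{prop: the limit}. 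This then gives the desired negative count in the limit. With this adjustment your outline goes through, and the induction-on-$n$ bookkeeping you sketch is in fact simpler here than in the cyclic case since the chain terminates.
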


The proof is omitted since it is similar to that of Proposition \ref{prop: the limit}. Note that the the moduli space $\mathring{\cM}(M[0],\cdots,M[n])$ is a vector bundle over $\prod_{i=0}^n \cM(M[i],1)$ with the fiber
\ie 
\prod_{i=0}^{n-1}\mathrm{Hom}(\cV_i,\cV_{i+1}),
\fe 
where $\cV_i$ is the tautological bundle on $\cM^1(M[i])$. Therefore, we can easily write down the $\mathbf{T}_{\mathrm{edge}}$-equivariant K-theory partition function
\ie 
Z^{(n)}=\sum_{M[0],\cdots,M[n]}\chi(\mathring{\cM}(M[0],\cdots,M[n]),\cO)x_0^{M[0]}\cdots x_n^{M[n]}.
\fe
Using Lemma \ref{decomposition of symmetric product}, we can write fiber bundle $\prod_{i=0}^{n-1}\mathrm{Hom}(\cV_i,\cV_{i+1})$ as direct sum of symmetric powers:
\ie 
\bigoplus_{\underline{\lambda}_0,\cdots ,\underline{\lambda}_{n-1}} S^{\underline{\lambda}_0}(\cV_0)\otimes S^{\underline{\lambda}_0}(\cV_1^*)\otimes S^{\underline{\lambda}_1}(\cV_1)\otimes S^{\underline{\lambda}_1}(\cV_2^*)\otimes \cdots\otimes S^{\underline{\lambda}_{n-1}}(\cV_{n-1})\otimes S^{\underline{\lambda}_{n-1}}(\cV_{n}^*),
\fe 
where the summation is over Young diagrams $\underline{\lambda}_0,\cdots ,\underline{\lambda}_{n-1}$. Thus the partition function $Z$ decomposes as
\ie 
Z^{(n)}=\sum_{\underline{\lambda}_0,\cdots ,\underline{\lambda}_{n-1}}& s_0^{|\underline{\lambda}_0|}\cdots s_n^{|\underline{\lambda}_n|}\cZ_1^{\underline{\lambda}_0,\emptyset}(r_{1,0},r_{2,0},x_0)\cZ_1^{\underline{\lambda}_1,\underline{\lambda}_0}(r_{1,1},r_{2,1},x_1)\cdots\\
&\cdots\cZ_1^{\underline{\lambda}_{n-1},\underline{\lambda}_{n-2}}(r_{1,n-1},r_{2,n-1},x_{n-1})\cZ_1^{\emptyset,\underline{\lambda}_{n-1}}(r_{1,n},r_{2,n},x_n),
\fe 
where $\mathcal Z^{\underline{\lambda},\underline{\mu}}_1(r_{1},r_{2},x)$ is the equivariant K-theory partition function for the ADHM quiver:
\ie\label{halffactor}
\mathcal Z^{\underline{\lambda},\underline{\mu}}_1(r_{1},r_{2},x)=\sum _{M}\chi(\cM(M,1),S^{\underline{\lambda}}(\cV)\otimes S^{\underline{\mu}}(\cV^*))x^{M}.
\fe 
Using the above factorization, the partition function $Z(n)$ can be built up from the basic blocks $Z^{(1)}$:
\ie 
Z^{(n)}=\sum_{\underline{\lambda}_0,\cdots ,\underline{\lambda}_{n-1}} s_0^{|\underline{\lambda}_0|}\cdots s_n^{|\underline{\lambda}_n|}&Z_{\underline{\lambda}_0}^{(1)}(r_{1,0},r_{2,0},r_{1,1},r_{2,1},x_0,x_1)Z_{\underline{\lambda}_{n-1}}^{(1)}(r_{1,n-1},r_{2,n-1},r_{1,n},r_{2,n},x_{n-1},x_n)\\
&\times F_{\underline{\lambda}_0,\underline{\lambda}_1}(r_{1,1},r_{2,1},x_1) \cdots F_{\underline{\lambda}_{n-2},\underline{\lambda}_{n-1}}(r_{1,n-1},r_{2,n-1},x_{n-1}).
\fe 
Here $Z_{\underline{\lambda}_i}^{(1)}(r_{1,i},r_{2,i},r_{1,i+1},r_{2,i+1},x_i,x_{i+1})$ is the coefficient of $s_i^{|\underline{\lambda}_i|}$ in the expansion
$$Z^{(1)}=\sum_{\underline{\lambda}_i} s_i^{|\underline{\lambda}_i|}Z_{\underline{\lambda}_i}^{(1)}(r_{1,i},r_{2,i},r_{1,i+1},r_{2,i+1},x_i,x_{i+1}).$$The gluing factor $F_{\underline{\lambda}_{i-1},\underline{\lambda}_{i}}(r_{1,i},r_{2,i},x_i)$ is
\ie 
F_{\underline{\lambda}_{i-1},\underline{\lambda}_{i}}(r_{1,i},r_{2,i},x_i)=\frac{\cZ^{\underline{\lambda}_{i},\underline{\lambda}_{i-1}}_1(r_{1,i},r_{2,i},x_i)}{\cZ^{\underline{\lambda}_{i},\emptyset}_1(r_{1,i},r_{2,i},x_i)\cZ^{\emptyset,\underline{\lambda}_{i-1}}_1(r_{1,i},r_{2,i},x_i)}
\fe 

\section*{Acknowledgments}
We are grateful to Michele Del Zotto for his collaboration at the early stage of this project and for valuable comments and suggestions to improve the draft. We thank Christopher Beem, Kevin Costello, Max H\"{u}bner, Sakura Schafer-Nameki for useful discussion. A part of this work was done when we were participating in the Aspen Winter Workshop. The authors thank the organizers of the workshop and the Aspen Center for their warm hospitality. Research of JO was supported by ERC grants 864828 and 682608. Research at the Perimeter Institute is supported by the Government of Canada through Industry Canada and the Province of Ontario through the Ministry of Economic Development $\&$ Innovation.
\providecommand{\href}[2]{#2}\begingroup\raggedright
    
    \endgroup

\end{document}